\newcommand*{\fancyrefdeflabelprefix}{def}
\newcommand*{\Frefdefname}{Definition}
\newcommand{\COMMENT}[1]{}
\newcommand{\changed}[1]{{#1}}
\newcommand{\deleted}[1]{}
\newcommand{\ie}{i.e.}
\newcommand{\eg}{e.g.}
\newcounter{programlinenumber}
\newcommand{\Set}[1]{\{ \; {#1} \; \}}
\newcommand{\vbar}{\; | \;}
\newcommand{\squishlist}{
   \begin{list}{$\bullet$}
    { \setlength{\itemsep}{0pt}      \setlength{\parsep}{3pt}
      \setlength{\topsep}{3pt}       \setlength{\partopsep}{0pt}
      \setlength{\leftmargin}{1.5em} \setlength{\labelwidth}{1em}
      \setlength{\labelsep}{0.5em} } }
\newcommand{\squishlisttwo}{
   \begin{list}{$\bullet$}
    { \setlength{\itemsep}{0pt}    \setlength{\parsep}{0pt}
      \setlength{\topsep}{0pt}     \setlength{\partopsep}{0pt}
      \setlength{\leftmargin}{2em} \setlength{\labelwidth}{1.5em}
      \setlength{\labelsep}{0.5em} } }
\newcommand{\squishend}{
    \end{list}  }
\newcommand{\state}{\sigma}
\newcommand{\assertion}{\varphi}
\newcommand{\spec}{\Phi}
\newcommand{\acqpredSym}{\texttt{acquire}}
\newcommand{\relpredSym}{\texttt{release}}
\newcommand{\acqpred}[1]{\texttt{\acqpredSym}({#1})}
\newcommand{\relpred}[1]{\texttt{\relpredSym}({#1})}
\newcommand{\prf}{\mu}
\newcommand{\pred}{\varphi}
\newcommand{\predset}{\mathcal{R}}
\newcommand{\lockset}{\mathcal{L}}
\newcommand{\predmap}{\mathfrak{pm}}
\newcommand{\lockmap}{\mathfrak{lm}}
\newcommand{\edge}[2]{{#1} \rightarrow {#2}}
\newcommand{\stmtedge}[3]{{#1} \xrightarrow{#2} {#3}}
\newcommand{\assertedge}[3]{{#1} \xrightarrow{#2} {#3}}
\newcommand{\triple}[3]{\{{#1}\} {#2} \{{#3}\}}
\newcommand{\mustpreceedarrow}{\rightarrowtail}
\newcommand{\mustpreceedtcarrow}{\mustpreceedarrow^*}
\newcommand{\mustpreceedtc}[2]{{#1} \mustpreceedtcarrow {#2}}
\newcommand{\lockeq}{\rightleftarrows}
\newcommand{\seqsat}[2]{{#1} \models_s {#2}}
\newcommand{\concsat}[3]{{(#1,#2)} \models_c {#3}}
\newcommand{\cfg}[1]{\ensuremath{G_{#1}}}
\newcommand{\tredge}[1]{\xrightarrow{#1}}
\newcommand{\uniq}[1]{\hat{#1}}
\newcommand{\uniqv}[1]{\hat{#1}}
\newcommand{\kw}[1]{{\bf {\tt #1}}}
\newcommand{\kwb}[1]{\textcolor{blue}{#1}}
\newcommand{\lnrf}[1]{Line~\ref{lr:#1}}
\newcommand{\mypara}[1]{\vspace{0.6em}\noindent{\em #1.}}
\def\doi{7 (3:10) 2011}
\begin{document}

\title{Logical Concurrency Control From Sequential Proofs}

\author[J.~Deshmukh]{Jyotirmoy Deshmukh\rsuper a}	
\address{{\lsuper a}University of Texas at Austin}	
\email{jyotirmoy@cerc.utexas.edu}  
\thanks{{\lsuper a}Work done while at Microsoft Research India}	

\author[G.~Ramalingam]{G.~Ramalingam\rsuper b}	
\address{{\lsuper{b,c,d}}Microsoft Research, India}	
\email{grama@microsoft.com, rvprasad@microsoft.com, kapilv@microsoft.com}

\author[V.-P.~Ranganath]{Venkatesh-Prasad Ranganath\rsuper c}	

\author[K.~Vaswani]{Kapil Vaswani\rsuper d}	


\keywords{concurrency control, program synthesis}
\subjclass{D.1.3, D.2.4, F.3.1}

\maketitle

\begin{abstract}
We are interested in identifying and enforcing the \emph{isolation requirements}
of a concurrent program, \ie, concurrency control that
ensures that the program meets its specification. The thesis of this paper
is that this can be done systematically starting from a sequential proof,
\ie, a proof of correctness of the program in the absence of concurrent
interleavings.
We illustrate our thesis by presenting a solution to the
problem of making a sequential library thread-safe for concurrent clients. 
We consider a sequential library annotated with assertions along with
a proof that these assertions hold in a sequential execution.
We show how we can use the proof to derive concurrency control 
that ensures that any execution of the library methods, when invoked
by concurrent clients, satisfies the same assertions.
We also present an extension to guarantee that the library methods are
linearizable or atomic.
\end{abstract}




\newcommand{\lockvar}{\textit{lock}}
\newcommand{\expOp}{\textit{f}}
\newcommand{\ProofS}{\ensuremath{\mathcal F}}
\newcommand{\Lib}{\ensuremath{\mathcal L}}
\newcommand{\SeqC}{\ensuremath{\mathcal P}}
\newcommand{\SpecS}{\ensuremath{\mathcal S}}
\newcommand{\ConcC}{\ensuremath{\mathcal{CP}}}
\newcommand{\SeqClient}{\ensuremath{\mathcal C}}
\newcommand{\ttatomic}{\texttt{atomic{}}}

\section{Introduction}\label{sec:intro}

A key challenge in concurrent programming is identifying and enforcing
the \emph{isolation requirements} of a program:
determining what constitutes undesirable interference between different threads
and implementing concurrency control mechanisms that prevent this.
In this paper, we show how a solution to this problem can be obtained
systematically from a \emph{sequential proof}: a proof that the program
satisfies a specification in the absence of concurrent interleaving. 

\mypara{Problem Setting} We illustrate our thesis by considering the
concrete problem of making a sequential library safe for concurrent clients.
Informally, given a sequential library that works correctly when
invoked by any sequential client, we show how to synthesize concurrency
control code for the library that ensures that it will work
correctly when invoked by any concurrent client.

\subsection*{Part I: Ensuring Assertions In Concurrent Executions}

Consider the example in \Fref{fig:motivating-example}(a).
The library consists of one procedure {\tt Compute}, which applies an
expensive function \expOp{} to an input variable {\tt num}.
As a performance optimization,
the implementation caches the last input and result.  If the
current input matches the last input, the last computed result is
returned.

\begin{figure}[h]
\centering
\begin{minipage}[h]{4.5in}
    \subfigure[][]{%
        \label{fig:ex1-a}
        \begin{minipage}[b]{0.1in}
        {\footnotesize {\tt
        \begin{ntabbing}
12\=12\=12\=123\=123\=123\=123\=123\=123\=123\=123\=123\kill
\kw{int} lastNum = 0;               \label{}                \\
\kw{int} lastRes = $\expOp$(0);     \label{}                \\
/* @returns \expOp(num) */   \label{lr:ex1-spec}     \\
\kw{Compute}(num) \{                \label{}                \\
\> /* \kwb{acquire (l);} */         \label{lr:ex1-conc1}    \\
\> \kw{if}(lastNum==num) \{         \label{lr:ex1-cond}     \\
\> \> res = lastRes;                \label{lr:ex1-next}     \\
\> \} \kw{else} \{                  \label{lr:ex1-else}     \\
\> \> /* \kwb{release (l);} */      \label{lr:ex1-conc2}    \\
\> \> res = $\expOp$(num);          \label{lr:ex1-ln1}      \\
\> \> /* \kwb{acquire (l);} */      \label{lr:ex1-conc3}    \\
\> \> lastNum = num;                \label{lr:ex1-ln2}      \\
\> \> lastRes = res;                \label{lr:ex1-overwrt}  \\
\> \}                               \label{}                \\
\> /* \kwb{release (l);} */         \label{lr:ex1-conc4}    \\
\> \kw{return} res;                 \label{}                \\
\}                                  \label{}                \\
        \end{ntabbing}}}
        \end{minipage}
    }
    \subfigure[][]{%
        \label{fig:ex1-b}
        \includegraphics[width=2.6in]{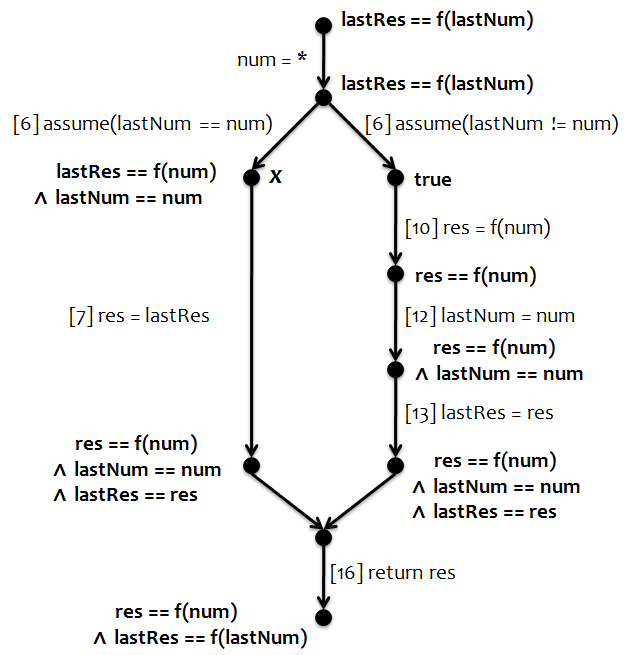}
    }%

\caption[Concurrency control for procedure {\tt Compute}]{
\subref{fig:ex1-a} Procedure {\tt Compute} (excluding
Lines~\ref{lr:ex1-conc1},\ref{lr:ex1-conc2},\ref{lr:ex1-conc3},\ref{lr:ex1-conc4})
applies a (side-effect free) function $\expOp$ to a parameter {\tt num} and
caches the result for later invocations.
Lines~\ref{lr:ex1-conc1},\ref{lr:ex1-conc2},\ref{lr:ex1-conc3},\ref{lr:ex1-conc4}
contain a lock-based concurrency control generated by our technique.
\subref{fig:ex1-b} The control-flow graph of {\tt Compute}, its edges
labeled by statements of {\tt Compute} and nodes labeled by proof assertions.
}%
\label{fig:motivating-example}
\end{minipage}
\end{figure} 
\noindent This procedure works correctly when used by a sequential client,
but not in the presence of concurrent procedure invocations.
E.g., consider an invocation of \texttt{Compute(5)} followed by 
concurrent invocations of \texttt{Compute(5)} and \texttt{Compute(7)}.
Assume that the second invocation of \texttt{Compute(5)} evaluates the
condition in \lnrf{ex1-cond}, and proceeds to \lnrf{ex1-next}. Assume
a context switch occurs at this point, and the invocation of
\texttt{Compute(7)} executes completely, overwriting \texttt{lastRes}
in \lnrf{ex1-overwrt}. Now, when the invocation of \texttt{Compute(5)}
resumes, it will erroneously return the (changed) value of
\texttt{lastRes}.  

In this paper, we present a technique that can detect the potential for such
interference and synthesize concurrency control to prevent the same.
The (lock-based) solution synthesized by our technique for the above example is
shown (as comments) in Lines 5, 9, 11, and 15 in \Fref{fig:ex1-a}.
With this concurrency control, the example works correctly even for concurrent
procedure invocations while permitting threads to perform the expensive
function $f$ concurrently. 

\mypara{The Formal Problem}
Formally, we assume that the correctness criterion for the library
is specified as a set of assertions and that 
the library satisfies these assertions in any execution of any sequential 
client.  \emph{ Our goal is to ensure that any execution of the library 
with any concurrent client also satisfies the given assertions.}
For our running example in \Fref{fig:ex1-a}, \lnrf{ex1-spec} specifies
the desired functionality for procedure \texttt{Compute}: {\tt
Compute} returns the value \expOp(\texttt{num}). 

\mypara{Logical Concurrency Control From Proofs}
A key challenge in coming up with concurrency control is determining
what interleavings between threads are safe.
A conservative solution may reduce concurrency by preventing correct interleavings.
An aggressive solution may enable more concurrency but introduce bugs.

The fundamental thesis we explore is the following: a proof that a
code fragment satisfies certain assertions in a sequential execution
precisely identifies the properties relied on by the code at different
points in execution; hence, such a sequential proof clearly identifies what
concurrent interference can be permitted; thus, a correct concurrency control
can be systematically (and even automatically) derived from such a proof.

We now provide an informal overview of our approach by illustrating
it for our running example.  \Fref{fig:ex1-b} presents a proof of
correctness for our running example (in a sequential setting).  The
program is presented as a control-flow graph, with its edges
representing program statements.  (The statement ``\texttt{num = *}''
at the entry edge indicates that the initial value of parameter
\texttt{num} is unknown.)
A proof consists of an invariant $\prf(u)$ attached
to every vertex $u$ in the control-flow graph (as illustrated in
the figure) such that:
(a) for every edge $\edge{u}{v}$ labelled with a statement
$s$, execution of $s$ in a state satisfying $\prf(u)$ is guaranteed
to produce a state satisfying $\prf(v)$, 
(b) The invariant $\prf(entry)$ attached to the entry vertex is satisfied
by the initial state and is implied by the invariant $\prf(exit)$ attached
to the exit vertex, and
(c) for every edge
$\edge{u}{v}$ annotated with an assertion $\assertion$, we have
$\prf(u) \Rightarrow \assertion$.
Condition (b) ensures that the proof is valid over any \emph{sequence of}
executions of the procedure.

The invariant $\prf(u)$ at vertex $u$ indicates the property required
(by the proof) to hold at $u$ to ensure that a sequential
execution satisfies all assertions of the library. We can reinterpret this
in a concurrent setting as follows: when a thread $t_1$ is at point $u$, it can tolerate
changes to the state by another thread $t_2$ as long as the invariant
$\prf(u)$ continues to hold from $t_1$'s perspective; however, if another
thread $t_2$ were to change the state such that $t_1$'s invariant
$\prf(u)$ is broken, then the continued execution by $t_1$ may
fail to satisfy the desired assertions.

Consider the proof in \Fref{fig:ex1-b}. The vertex labeled $x$ in the
figure corresponds to the point before the execution of
\lnrf{ex1-next}.  The invariant attached to $x$ indicates that the
proof of correctness depends on the condition $lastRes$==$f(num)$ being
true at $x$.  The execution of \lnrf{ex1-ln1} by another thread will
not invalidate this condition.  But, the execution of
\lnrf{ex1-overwrt} by another thread can potentially invalidate this
condition. Thus, we infer that, when one thread is at point $x$, an
execution of \lnrf{ex1-overwrt} by another thread should be avoided.

We prevent the execution of a statement $s$ by one thread when another thread
is at a program point $u$ (if $s$ might invalidate a predicate $p$ that is
required at $u$) as follows. We introduce a lock $\ell_p$ corresponding to
$p$, and ensure that every thread holds $\ell_p$ at $u$ and
ensure that every thread holds $\ell_p$ when executing $s$.

Our algorithm does this as follows.
From the invariant $\prf(u)$ at vertex $u$, we compute a set of predicates
$\predmap(u)$. (For now, think of $\prf(u)$ as the conjunction of predicates
in $\predmap(u)$.) $\predmap(u)$ represents the set of predicates required
at $u$. For any edge $\edge{u}{v}$, any predicate $p$ that is in
$\predmap(v) \setminus \predmap(u)$ is required at $v$ but not at $u$.
Hence, we acquire the lock for $p$ along this edge. Dually, for any predicate
that is required at $u$ but not at $v$, we release the lock along the edge.
As a special case, we acquire (release) the locks for all predicates in
$\predmap(u)$ at procedure entry (exit) when $u$ is the procedure entry (exit)
vertex.
Finally, if the execution of the statement on edge $\edge{u}{v}$ can invalidate
a predicate $p$ that is required at some vertex, we acquire and release
the corresponding lock before and after the statement (unless it is already
a required predicate at $u$ or $v$).
Note that our approach conservatively assumes that any two statements in
the library may be simultaneously executed by different threads. If an analysis
can identify that certain statements cannot be simultaneously executed (by
different threads), this information can be exploited to improve the solution,
but this is beyond the scope of this paper.

Our algorithm ensures that the locking scheme does not lead to
deadlocks by merging locks when necessary, as described later.
Finally, we optimize the synthesized solution using a few simple
techniques. E.g., in our example whenever the lock corresponding to
{\tt lastRes == res} is held, the lock for {\tt lastNum == num} is also held.
Hence, the first lock is redundant and can be eliminated.

\Fref{fig:motivating-example} shows the resulting library
with the concurrency control we synthesize. This implementation satisfies its
specification even in a concurrrent setting.  The synthesized solution permits
a high degree to concurrency since it allows multiple threads to compute 
\expOp{} concurrently.  A more conservative but correct locking
scheme would hold the lock during the entire procedure execution.

A distinguishing aspect of our algorithm is that it requires only
local reasoning and not reasoning about interleaved executions, as is common
with many analyses of concurrent programs.
Note that the synthesized solution depends on the proof used.
Different proofs can potentially yield different concurrency control solutions
(all correct, but with potentially different performance). 

We note that our approach has a close connection to the Owicki-Gries~\cite{owicki76verifying} approach
to computing proofs for concurrent programs. The Owicki-Gries approach shows
how the proofs for two statements can be composed into a proof for the
concurrent composition of the statements \emph{if the two statements do not
interfere with each other}. Our approach detects potential
interference between statements and inserts concurrency-control so that the
interference does not occur (permitting a safe concurrent composition of the
statements).

\mypara{Implementation} We have implemented our algorithm, using an existing
software model checker to generate the sequential proofs.
We used the tool to successfully synthesize concurrency control for several small examples.
The synthesized solutions are equivalent to those an expert programmer
would use.

\subsection*{Part II: Ensuring Linearizability}
The above approach can be used to ensure that concurrent executions
guarantee desired safety properties, preserve data-structure
invariants, and meet specifications (\eg, given as a precondition/postcondition pair).
Library implementors may, however, wish to provide the stronger guarantee of
linearizability with respect to the sequential specification:
\emph{
any concurrent execution of a procedure is guaranteed to satisfy its specification
and appears to take effect instantaneously at some point during its execution}.
In the second half of the paper, we show how the techniques sketched above can be extended
to guarantee linearizability.

\subsection*{Contributions}

We present a technique for synthesizing concurrency
control for a library (\eg, developed for use by a single-threaded client)
to make it safe for use by concurrent clients.
However, we believe that the key idea we present -- a technique for identifying and
realizing isolation requirements from a sequential proof -- can be used in
other contexts as well (\eg, in the context of a whole program consisting of
multiple threads, each with its own assertions and sequential proofs).

Sometimes a library designer may choose to delegate the responsibility for
concurrency control to the clients of the library and not make the library
thread-safe\footnote{
This may be a valid design option in some cases.
However, in examples such as our running example, this could be a bad idea.
}.
Alternatively, library implementers could choose to make the execution
of a library method appear atomic by wrapping it in a transaction and executing it in
an STM (assuming this is feasible). 
These are valid options but orthogonal to the point of this paper.
Typically, a program is a software stack, with each level serving as
a library. Passing the buck, with regards to
concurrency control, has to stop somewhere. Somewhere in the stack,
the developer needs to decide what degree of isolation is required by
the program; otherwise, we would end up with a program consisting of
multiple threads where we require every thread's execution to appear
atomic, which could be rather severe and restrict concurrency
needlessly.
The ideas in this paper provide a systematic method for determining
the isolation requirements. While we illustrate the idea in a
simplified setting, it should ideally be used at the appropriate level
of the software stack.

In practice, full specifications are rarely available.
We believe that our technique can be used even with
lightweight specifications or in the absence of specifications.
Consider our example in Fig.~\ref{fig:motivating-example}.
A symbolic analysis of this library, with a harness
representing a sequential client making an arbitrary sequence of calls
to the library, can, in principle, infer that the returned
value equals \texttt{f(num)}. As the returned value is the only observable
value, this is the strongest functional specification a user can write.
Our tool can be used with such an inferred specification as well. 

\mypara{Logical interference}
Existing concurrency control mechanisms (both pessimistic as well as optimistic)
rely on a data-access based
notion of interference: concurrent accesses to the same data, where at
least one access is a write, is conservatively treated as interfence.
A contribution of this paper is that it introduces a more logical/semantic notion of
interference that can be used to achieve more permissive,
yet safe, concurrency control. Specifically, concurrency control based
on this approach permits interleavings that existing schemes based on
stricter notion of interference will disallow.
Hand-crafted concurrent code often permits ``benign interference'' 
for performance reasons,
suggesting that programmers do rely on such a logical notion of interference.

\newcommand{\ttt}{\texttt}
\newcommand{\lib}{\ensuremath{\mathcal L}}
\newcommand{\libcc}{\ensuremath{\hat{\lib}}}
\newcommand{\ProcSet}{\ensuremath{\mathcal P}}
\newcommand{\Proc}{\textit{P}}
\newcommand{\Locks}{Lk}
\newcommand{\TidSet}{T}
\newcommand{\stmt}{S}
\newcommand{\GV}{\ensuremath{V_G}}
\newcommand{\gv}{\textit{sv}}
\newcommand{\LVProc}{\ensuremath{V_\Proc}}
\newcommand{\LV}{\ensuremath{V_L}}
\newcommand{\lv}{\textit{lv}}
\newcommand{\lock}{\textit{lk}}
\newcommand{\PC}{\textit{pc}}
\newcommand{\entryVertex}[1]{\ensuremath{N_{#1}}}
\newcommand{\exitVertex}[1]{\ensuremath{X_{#1}}}
\newcommand{\qvertex}{\ensuremath{w}}
\newcommand{\instate}{\state_{in}}
\newcommand{\outstate}{\state_{out}}
\newcommand{\lockstate}{\state_{lk}}
\newcommand{\idlestate}{\state^\ell_{Q}}
\newcommand{\slStates}{\Sigma^s}
\newcommand{\sllStates}{\Sigma^s_\ell}
\newcommand{\slgStates}{\Sigma^s_g}
\newcommand{\clStates}{\Sigma^c}
\newcommand{\cllStates}{\Sigma^c_\ell}
\newcommand{\clgStates}{\Sigma^c_g}
\newcommand{\lockStates}{\Sigma^c_\text{lk}}
\newcommand{\threadIds}{T}
\newcommand{\lzty}{Linearizability}

\newcommand{\sltr}{\rightsquigarrow_{\textit{sl}}}
\newcommand{\sctr}{\rightsquigarrow_{\textit{sc}}}
\newcommand{\str}{\rightsquigarrow_{\textit{s}}}
\newcommand{\ctr}{\rightsquigarrow_{\textit{c}}}
\newcommand{\lsarrow}[3]{{\stackrel{#1}{#2}_{#3}}}
\newcommand{\sltrOf}[1]{\lsarrow{#1}{\rightsquigarrow}{\textit{sl}}}
\newcommand{\sctrOf}[1]{\lsarrow{#1}{\rightsquigarrow}{\textit{sc}}}
\newcommand{\strOf}[1]{\lsarrow{#1}{\rightsquigarrow}{\textit{s}}}
\newcommand{\ctrOf}[1]{\lsarrow{#1}{\rightsquigarrow}{\textit{c}}}
\newcommand{\longlsarrow}[3]{{\stackrel{#1}{{#2}_{#3}}}}
\newcommand{\longstrOf}[1]{\longlsarrow{#1}{\rightsquigarrow}{\textit{s}}}
\newcommand{\longctrOf}[1]{\longlsarrow{#1}{\rightsquigarrow}{\textit{c}}}

\newcommand{\exec}{\pi}

\section{The Problem}\label{sec:problem}

In this section, we introduce required terminology and formally define the problem.
Rather than restrict ourselves to a specific syntax for programs and assertions,
we will treat them abstractly, assuming only that they can be given a semantics
as indicated below, which is fairly standard.

\subsection{The Sequential Setting}

\noindent{\em Sequential Libraries.} %
A library $\lib$ is a pair $(\ProcSet, \GV)$, where $\ProcSet$ is a
set of procedures (defined below), and $\GV$ 
is a set of variables, termed \emph{global}
variables, accessible to all and only procedures in $\ProcSet$.
A procedure $\Proc \in \ProcSet$ is a pair $(\cfg{\Proc}, \LVProc)$, where $\cfg{\Proc}$ is a
control-flow graph with each edge labeled by a primitive statement, and $\LVProc$
is a set of variables, referred to as \emph{local} variables,
restricted to the scope of $\Proc$.
(Note that $\LVProc$ includes the formal parameters of $\Proc$ as well.)
To simplify the semantics, we will
assume that the set $\LVProc$ is the same for all procedures and denote
it $\LV$.  

Every control-flow graph has a unique entry vertex
$\entryVertex{\Proc}$ (with no predecessors) and a unique
exit vertex $\exitVertex{\Proc}$ (with no successors).
Primitive statements are either \texttt{skip} statements, assignment
statements, {\tt assume} statements, {\tt return} statements, or
{\tt assert} statements.
An {\tt assume} statement is used to implement conditional control flow as usual.
Given control-flow graph nodes $u$ and $v$, we denote an edge from $u$ to
$v$, labeled with a primitive statement $S$, by $\stmtedge{u}{S}{v}$.

To reason about all possible sequences of invocations of the library's
procedures,
we define the \emph{control graph} of a library to be 
the union of the control-flow graphs of all the procedures, augmented by
a new vertex \qvertex{}, as well as an edge from every procedure exit vertex to
\qvertex{} and an edge from \qvertex{} to every procedure entry vertex.  We
refer to \qvertex{} as the \emph{quiescent} vertex.  Note that a one-to-one
correspondence exists between a path in the control graph of the library,
starting from \qvertex{}, and the execution of a sequence of procedure calls.
The edge $\edge{\qvertex}{\entryVertex{\Proc}}$ from the quiescent vertex to
the entry vertex of a procedure $\Proc$ models an arbitrary call to procedure
$\Proc$.
We refer to these as \emph{call edges}. 

\mypara{Sequential States} %
A procedure-local state $\state_\ell \in \sllStates$ is a pair $(\PC, \state_d)$
where $\PC$, the program counter, is a vertex in the control graph and
$\state_d$ is a map from the local variables $\LV$ to their values.
(We use ``$s$'' as a superscript or subscript to indicate elements of the
semantics of sequential execution.)
A global state $\state_g \in \slgStates$ is a map from global variables $\GV$ to their values.
A library state $\state$ is a pair $(\state_\ell, \state_g) \in \sllStates \times \slgStates$.
We define $\slStates$ to be $\sllStates \times \slgStates$.
We say that a state is a \emph{quiescent state} if its \PC{} value
is \qvertex{} and that it is an \emph{entry state} if its \PC{}
value equals the entry vertex of some procedure. 

\mypara{Sequential Executions} %
We assume a standard semantics for primitive statements that can be
captured as a transition relation $\str \;
\subseteq \slStates \times \slStates$ as follows.  Every control-flow edge $e$
induces a transition relation $\strOf{e}$, where $\state \strOf{e} \state'$ iff
$\state'$ is one of the possible outcomes of the execution of (the statement labeling)
the edge $e$ in state $\state$.
The edge $\edge{\qvertex}{\entryVertex{\Proc}}$ from the
quiescent vertex to the entry vertex of a procedure $\Proc$ models an arbitrary
call to procedure $\Proc$.  Hence, in defining the transition relation, such
edges are treated as statements that assign a non-deterministically chosen
value to every formal parameter of $\Proc$ and the default initial value to
every local variable of $\Proc$.  Similarly, the edge
$\edge{\exitVertex{\Proc}}{\qvertex}$ is treated as a \texttt{skip}
statement.
We say $\state \str \state'$ if there exists some edge $e$ such that $\state
\strOf{e} \state'$.

A \emph{sequential execution} is a sequence of states $\state_0
\state_1 \cdots \state_k$
where $\state_0$ is the initial state of the library and we have $\state_i \str
\state_{i+1}$ for $0 \leq i < k$.
A sequential execution represents the execution of a sequence of calls
to the library's procedures (where the last call's execution may
be incomplete).  Given a sequential execution $\state_0 \state_1 \cdots
\state_k$, we say that $\state_i$ is the \emph{corresponding entry state} of
$\state_j$ if $\state_i$ is an entry state and no state $\state_h$ is an entry
state for $i < h \leq j$.

\mypara{Sequential Assertions}%
\changed{
We use {\tt assert} statements to specify desired correctness properties
of the library. \texttt{Assert} statements have no effect on the execution
semantics and are equivalent to {\tt skip} statements in the semantics.
Assertions are used only to define the notion of \emph{well-behaved}
executions as follows.

An {\tt assert} statement is of the form {\tt assert} $\theta$ where,
$\theta$ is a 1-state assertion $\assertion$ or a 2-state assertion $\spec$.
A 1-state assertion, which we also refer to as a predicate, makes an
assertion about a single library state. Rather than define a specific
syntax for assertions, we assume that the semantics of assertions are
defined by a relation $\seqsat{\state}{\assertion}$ denoting that a
state $\state$ satisfies the assertion $\assertion$.

1-state assertions can be used to specify the invariants expected at
certain program points.
In general, specifications for procedures take the form of
two-state assertions, which relate the input state to output state. 
We use 2-state assertions for this purpose.
The semantics of a 2-state assertion $\spec$ is assumed to be defined
by a relation
$\seqsat{(\instate, \outstate)}{\spec}$ (meaning that state $\outstate$
satisfies assertion $\spec$ with respect to state $\instate$).
In our examples, we use special input variables $v^{in}$ to refer to the
value of the variable $v$ in  the first state.
E.g., the specification ``$x == x^{in}+1$'' asserts that the value of
$x$ in the second state is one more than its value in the first
state.
}

\begin{defi}
  \label{def:seq-spec-correctness}
  \changed{
  A sequential execution is said to satisfy the library's assertions if for
  any transition
  $\state_i \longstrOf{e} \state_{i+1}$ in the execution,
  where $e$ is labelled by the statement ``$\texttt{assert } \theta$'',
  we have
  (a) $\seqsat{\state_i}{\theta}$ if $\theta$ is a 1-state assertion, and
  (b) $\seqsat{(\instate,\state_i)}{\theta}$ where $\instate$ is the
  corresponding entry state of $\state_i$, otherwise.
  A sequential library satisfies its
  specifications if every execution of the library satisfies its specifications.
  }
\end{defi}

\subsection{The Concurrent Setting}

\noindent{\em Concurrent Libraries.} %
A concurrent library $\lib$ is a triple $(\ProcSet, \GV, \Locks)$, where
$\ProcSet$ is a set of concurrent procedures, $\GV$ is a set of global 
variables, and $\Locks$ is a set of global locks.  A concurrent procedure is like a
sequential procedure, with the extension that a primitive statement is either
a sequential primitive statement or a locking statement 
of the form $\acqpred{\ell}$ or $\relpred{\ell}$ where $\ell$ is a lock.

\mypara{Concurrent States} %
A concurrent library permits concurrent invocations of
procedures. We associate each procedure invocation with a thread
(representing the client thread that invoked the procedure).  Let $\TidSet$
denote an infinite set of thread-ids, which are used as unique identifiers for
threads.
In a concurrent execution, every thread has a private copy of local variables,
but all threads share a single copy of the global variables.  Hence, the
local-state in a concurrent execution is represented by a map from $\TidSet$ to
$\sllStates$.  (A thread whose local-state's \PC{} value is the quiescent point
represents an idle thread, \ie, a thread not processing any procedure
invocation.)
Let $\cllStates = \TidSet \rightarrow \sllStates$
denote the set of all local states.

At any point during execution, a lock $\lock$ is either free or held by one 
thread.  We represent the state of locks by a partial function
from $\Locks$ to $\TidSet$ indicating which thread, if any, holds any given lock.
(A lock that is not held by any thread will not be in the domain of the partial function.)
Let $\lockStates = \Locks \hookrightarrow \TidSet$ represent the set of
all lock-states.
Let $\clgStates$ = $\slgStates \times \lockStates$ denote the set of all global states.
Let $\clStates = \cllStates \times \clgStates$ denote the set of all states.
Given a concurrent state $\state = (\state_\ell,(\state_g,\lockstate))$ and thread $t$,
we define $\state[t]$ to be the sequential state $(\state_\ell(t),\state_g)$. 

\mypara{Concurrent Executions} %
The concurrent semantics is induced by the sequential semantics
as follows.
Let $e$ be any control-flow edge labelled with
a sequential primitive statement, and $t$ be any thread. We say that
$(\state_\ell,(\state_g,\lockstate)) \ctrOf{(t,e)}$
$(\state'_\ell,(\state'_g,\lockstate))$ iff $(\state_t,\state_g)
\strOf{e} (\state'_t,\state'_g)$ where $\state_t = \state_\ell(t)$ and
$\state'_\ell = \state_\ell[t \mapsto \state'_t]$.  The transitions
corresponding to lock acquire/release are defined in the obvious way.
We say that $\state \ctr \state'$ iff there exists some $(t,e)$ such that
$\state \ctrOf{(t,e)} \state'$.

A \emph{concurrent execution} is a sequence
$\state_0 \state_1 \cdots \state_k$, where $\state_0$ is the initial state of
the library and $\state_i \ctrOf{\ell_i} \state_{i+1}$ for $0 \leq i < k$,
where the label $\ell_i = (t_i,e_i)$ identifies the executing thread and
executed edge.
We say that $\ell_0 \cdots \ell_{k-1}$ is the \emph{schedule} of this execution.
A sequence $\ell_0 \cdots \ell_m$ is a
\emph{feasible schedule} if it is the schedule of some concurrent execution.
Consider a concurrent execution $\state_0 \state_1 \cdots \state_k$.  We say
that a state $\state_i$ is a $t$-entry-state if it is generated from a
quiescent state by thread $t$ executing a call edge.  We say that $\state_i$ is
the \emph{corresponding t-entry state} of $\state_j$ if $\state_i$ is a
$t$-entry-state and no state $\state_h$ is a $t$-entry-state for $i < h \leq j$.

We note that our semantics uses sequential consistency. Extending our
results to support weaker memory models is future work.

\mypara{Interpreting Assertions In Concurrent Executions} %
\changed{
In a concurrent setting, assertions are evaluated in
the context of the thread that executes the corresponding {\tt assert}
statement.  We say that state $\state$ satisfies a 1-state assertion
$\assertion$ in the context of thread $t_i$ (denoted by
$\concsat{\state}{t_i}{\assertion}$) iff $\seqsat{\state[t_i]}{\assertion}$.
For any 2-state assertion $\spec$, we say that a given pair of states $(\instate,
\outstate)$ satisfies $\spec$ in the context of thread $t$ (denoted by
$\concsat{(\instate,\outstate)}{t}{\,\spec}$) iff
$\seqsat{(\instate[t],\outstate[t])}{\spec}$.
}

\begin{defi}
  \label{def:conc-spec-correctness}
  \changed{
  A concurrent execution $\exec$ is said to satisfy an assertion 
  ``$\texttt{assert } \theta$'' labelling an edge $e$ if for
  any transition $\state_i \longctrOf{(t,e)} \state_{i+1}$ in the
  execution,
  we have
  (a) $\concsat{\state_i}{t}{\theta}$, if $\theta$ is a 1-state assertion, and
  (b) $\concsat{(\instate,\state_i)}{t}{\theta}$ where $\instate$
  is the corresponding $t$-entry state of $\state_i$, otherwise.
 The execution is said to satisfy the library's specification if it
 satisfies all assertions in the library.
    A concurrent library satisfies its specification if every execution of the
  library satisfies its specification.
  }
\end{defi}

\noindent{\em Frame Conditions.} %
Consider a library with two global variables \ttt{x} and \ttt{y} and
a procedure \ttt{IncX} that increments \ttt{x} by 1. A possible specification
for \ttt{IncX} is
${(x == x^{in} + 1)~\&\&~(y == y^{in})}$. The condition ${y == y^{in}}$ is \ttt{IncX}'s
frame condition, which says that it will not modify \ttt{y}.
Explicitly stating such frame conditions is unnecessarily restrictive, as
a concurrent update to \ttt{y} by another procedure, when \ttt{IncX} is
executing, would be considered a violation of \ttt{IncX}'s specification.
Frame conditions can be handled better by treating a specification as a pair $(S,\spec)$
where $S$ is the set of all global variables referenced by
the procedure, and $\spec$ is a specification that
does not refer to any global variables outside $S$. For our above example,
the specification will be (\{$x$\}, ${x == x^{in} + 1)}$).
In the sequel, however, we will restrict ourselves to the simpler setting
and ignore this issue.

\subsection{Goals}
Our goal is:
Given a sequential library \lib{} with assertions satisfied
in every sequential execution,
construct \libcc{}, by augmenting \lib{} with concurrency
control, such that every concurrent execution of \libcc{}
satisfies all assertions.
In Section~\ref{sec:linearizability}, we extend this goal to construct
\libcc{} such that every concurrent execution of \libcc{} is linearizable.

\newcommand{\obliged}{\mathfrak{om}}
\newcommand{\bothmap}{\mathfrak{m}}
\newcommand{\mayFalsify}{\mathfrak{mbf}}
\newcommand{\PLocksAcquired}{\textit{BasisLocksAcq}}
\newcommand{\PLocksReleased}{\textit{BasisLocksRel}}
\newcommand{\BreakLocks}{\textit{BreakLocks}}

\section{Preserving Single-State Assertions}
\label{sec:pres_assn}

In this section we describe our algorithm for synthesizing
concurrency control, but restrict our attention to single-state assertions.

\subsection{Algorithm Overview}
\label{sec:algo-overview}
A \emph{sequential proof} is a mapping $\prf$ from vertices of the control graph
to predicates such that
(a) for every edge $e = \stmtedge{u}{t}{v}$,
$\triple{\prf(u)}{t}{\prf(v)}$ is a valid Hoare triple
$\seqsat{\state_1}{\prf(u)}$
and
$\state_1 \strOf{e} \state_2$
implies
$\seqsat{\state_2}{\prf(v)}$),
and (b) for every edge
$\assertedge{u}{\textit{\texttt{assert }} \assertion}{v}$, we have $\prf(u)
\Rightarrow \assertion$.
Note that condition (a) requires $\triple{\prf(u)}{t}{\prf(v)}$ to be partially
correct. The execution of statement $t$ in a state satisfying $\prf(u)$
does not have to succeed. This is required primarily for the case when $t$
represents an \texttt{assume} statement. If we want to ensure that the
execution of a statement $t$ does not cause any runtime error, we can simply
replace $t$ by ``\texttt{assert $p$; $t$}'' where $p$ is the condition required to
ensure that $t$ does not cause any runtime error.)

Note that the invariant $\prf(u)$ attached to a vertex $u$ by a
proof indicates two things:
(i) any sequential execution reaching point $u$ will produce a
state satisfying $\prf(u)$, and
(ii) any sequential execution from point $u$, starting from a state
satisfying $\prf(u)$ will satisfy the invariants labelling other
program points (and satisfy all assertions encountered during the
execution).

A procedure that satisfies its assertions in a sequential execution
may fail to do so in a concurrent execution due to interference
by other threads.
E.g., consider a thread $t_1$ that reaches a program point
$u$ in a state that satisfies $\prf(u)$.
At this point, another thread $t_2$ may execute some statement that
changes the state to one where $\prf(u)$ no longer holds. Now, we no
longer have a guarantee that a continued execution by $t_1$ will
successfully satisfy its assertions.
The preceding paragraph, however, hints at the interference
we must avoid to ensure correctness:
when a thread $t_1$ is at point $u$, we should ensure that no
other thread $t_2$ changes the state to one where $t_1$'s invariant
$\prf(u)$ fails to hold. Any change to the state by another
thread $t_2$ can be tolerated by $t_1$ \emph{as long as the invariant
$\prf(u)$ continues to hold}.
We can achieve this by associating a lock with the invariant $\prf(u)$,
ensuring that $t_1$ holds this lock when it is at program point $u$,
and ensuring that any thread $t_2$ acquires this lock before executing
a statement that may break this invariant.
An invariant $\prf(u)$, in general, may be a boolean formula over simpler
predicates. We could potentially get different locking solutions by
associating different locks with different sub-formulae of the invariant.
We elaborate on this idea below.

A \emph{predicate mapping} is a mapping $\predmap$ from the vertices of
the control graph to a set of predicates.
A predicate mapping $\predmap$ is said to be a \emph{basis} for
a proof $\prf$ if every $\prf(u)$ can be expressed as a boolean formula
(involving conjunctions, disjunctions, and negation) over $\predmap(u)$.  A
basis $\predmap$ for proof $\prf$ is \emph{positive} if every $\prf(u)$ can be
expressed as a boolean formula involving only conjunctions and disjunctions
over $\predmap(u)$.

Given a proof $\prf$, we say that an edge $\stmtedge{u}{s}{v}$
\emph{sequentially positively preserves} a predicate $\pred$ if
$\triple{\prf(u) \land \pred}{s}{\pred}$ is a valid Hoare triple.
Otherwise, we say that the edge \emph{may sequentially falsify}
the predicate $\pred$.
Note that the above definition is in terms of the Hoare logic for our
sequential language.
However, we want to formalize the notion of a thread $t_2$'s execution
of an edge falsifying a predicate $\pred$ in a thread $t_1$'s scope.
Given a predicate $\pred$, let $\uniq{\pred}$ denote the predicate
obtained by replacing every local variable $x$ with a new unique
variable $\uniqv{x}$. We say that an edge $\stmtedge{u}{s}{v}$
\emph{may falsify} $\pred$ iff the edge may sequentially falsify
$\uniq{\pred}$.
(Note that this reasoning requires working with formulas with
free variables, such as $\uniqv{x}$. This is straightforward
as these can be handled just like extra program variables.)

E.g., consider \lnrf{ex1-overwrt} in Fig.~\ref{fig:motivating-example}.
Consider predicate $lastRes$==$f(num)$. By renaming local variable
{\tt num} to avoid naming conflicts, we obtain predicate {\tt lastRes
== f($\uniqv{{\mathtt{num}}}$)}.  We say that \lnrf{ex1-overwrt} {\em
may falsify} this predicate because the triple $\{res$ == $f(num)$
$\wedge$ $lastNum$ == $num$ $\wedge$ $lastRes$ == $f(\uniqv{num})\}$
{\tt lastRes = res} $\{lastRes$ == $f(\uniqv{num})\}$ is not a valid
Hoare triple.

Let $\predmap$ be a positive basis for a proof $\prf$ and $\predset = \cup_u
\predmap(u)$.
For any program point $u$, if a predicate $\pred$ is in $\predmap(u)$, we say that $\pred$ is \emph{relevant} at
program point $u$.  In a concurrent execution, we say that a predicate
$\pred$ is relevant to a thread $t$ in a given state if $t$ is at a program
point $u$ in the given state and $\pred \in \predmap(u)$.  Our locking scheme
associates a lock with every predicate $\pred$ in $\predset$.  The invariant it
establishes is that a thread, in any state, will hold the locks corresponding
to precisely the predicates that are relevant to it.  We will simplify the
initial description of our algorithm by assuming that distinct predicates are
associated with distinct locks and later relax this requirement.

Consider any control-flow edge $e = \stmtedge{u}{s}{v}$.
Consider any predicate $\pred$ in $\predmap(v) \setminus \predmap(u)$.
We say that predicate $\pred$ \emph{becomes relevant}\footnote{
Frequently it will be the case that the execution of statement $s$
makes predicate $\pred$ true. This is true if every invariant
$\prf(v)$ is a conjunction of the basis predicates in
$\predmap(u)$. Since we allow disjunctions as well, this is not, however,
always true.}
at edge $e$. In the motivating example, the predicate {\tt
lastNum == num} becomes relevant at \lnrf{ex1-ln2}

We ensure the desired invariant by acquiring the locks corresponding
to every predicate that becomes relevant at edge $e$
{\it prior to statement $s$ in the edge}.  (Acquiring the lock after $s$ may be too
late, as some other thread could intervene between $s$ and the acquire and
falsify predicate $\pred$.)

Now consider any predicate $\pred$ in $\predmap(u) \setminus \predmap(v)$.
We say that $\pred$ \emph{becomes irrelevant} at edge $e$.
E.g., predicate {\tt lastRes == f(lastNum)} becomes
irrelevant once the false branch at \lnrf{ex1-else} is taken.
For every $p$ that becomes irrelevant at edge $e$, we release the lock
corresponding to $p$ \emph{after} statement $s$.

The above steps ensure that in a concurrent execution a thread will hold
a lock on all predicates relevant to it.
The second component of the concurrency control mechanism is to ensure
that any thread acquires a lock on a predicate before it falsifies
that predicate.
Consider an edge $e = \stmtedge{u}{s}{v}$ in the control-flow graph.
Consider any predicate $\pred \in \predset$ that may be falsified by
edge $e$. We add an acquire of the lock corrresponding to this
predicate before $s$ (unless $\pred \in \predmap(u)$), and add a
release of the same lock after $s$ (unless $\pred \in \predmap(v)$).

\mypara{Managing locks at procedure entry/exit} %
We will need to acquire/release locks at procedure entry and exit
differently from the scheme above.
Our algorithm works with the control graph defined in \Fref{sec:problem}.
Recall that we use a quiescent vertex $\qvertex$ in the control graph.
The invariant $\prf(\qvertex)$ attached to this quiescent vertex
describes invariants maintained by the library (in between procedure
calls).
Any \texttt{return} edge $\stmtedge{u}{\tt return}{v}$ must be augmented to release
all locks corresponding to predicates in $\predmap(u)$ before returning.
Dually, any procedure entry edge $\edge{\qvertex}{u}$ must be augmented to
acquire all locks corresponding to predicates in $\predmap(u)$.

However, this is not enough. Let $\edge{\qvertex}{u}$ be a procedure $p$'s
entry edge. The invariant $\prf(u)$ is part of the library invariant
that procedure $p$ depends upon. It is important to ensure that when
a thread executes the entry edge of $p$ (and acquires locks corresponding to the
basis of $\prf(u)$) the invariant $\prf(u)$ holds. We achieve this by
ensuring that any procedure that invalidates the invariant $\prf(u)$
holds the locks on the corresponding basis predicates until it
reestablishes $\prf(u)$. We now describe how this can be done in a
simplified setting where the invariant $\prf(u)$ can be expressed
as the conjunction of the predicates in the basis $\predmap(u)$
for every procedure entry vertex $u$. (Disjunction can be handled at
the cost of extra notational complexity.) We will refer to the
predicates that occur in the basis $\predmap(u)$ of some procedure
entry vertex $u$ as \emph{library invariant predicates}.

We use an \emph{obligation} mapping $\obliged(v)$ that maps each vertex
$v$ to a set of library invariant predicates to track the invariant predicates
that may be invalid at $v$ and need to be reestablished before the procedure
exit. We say a function $\obliged$ is a valid obligation mapping if it
satisfies the following constraints for any edge $e = \edge{u}{v}$:
(a) if $e$ may falsify a library invariant $\pred$, then $\pred$ must
be in $\obliged(v)$, and
(b) if $\pred \in \obliged(u)$, then $\pred$ must be in $\obliged(v)$
unless $e$ \emph{establishes} $\pred$.
Here, we say that an edge $\stmtedge{u}{s}{v}$
\emph{establishes} a predicate $\pred$ if
$\triple{\prf(u)}{s}{\pred}$ is a valid Hoare triple.
Define $\bothmap(u)$ to be $\predmap(u) \cup \obliged(u)$. Now, the
scheme described earlier can be used, except that we use $\bothmap$
in place of $\predmap$.

\mypara{Locking along \texttt{assume} edges} %
Recall that we model conditional branching, based on a condition
$p$, using two edges labelled ``\texttt{assume} $p$'' and ``\texttt{assume} $!p$''.
Any lock to be acquired along an \texttt{assume} edge will need to be acquired
before the condition is evaluated.
If the lock is required along both \texttt{assume} edges, this is sufficient.
If the lock is not required along all \texttt{assume} edges out of a vertex,
then we will have to release the lock along the edges where it is not required.

\mypara{Deadlock Prevention} %
The locking scheme synthesized above may potentially lead to a deadlock.
We now show how to modify the locking scheme to avoid this possibility.
For any edge $e$, let $\mayFalsify(e)$ be (a conservative approximation of)
the set of all predicates that may be falsified by the execution of edge $e$.
We first define a binary relation $\mustpreceedarrow$ on the predicates used
(\ie, the set $\predset$) as follows:
we say
that $p \mustpreceedarrow r$ iff there exists a control-flow edge
$\stmtedge{u}{s}{v}$ such that $p \in \bothmap(u) \wedge r \in (\bothmap(v)
\cup \mayFalsify(\stmtedge{u}{s}{v})) \setminus \bothmap(u)$.
Note that $p \mustpreceedarrow r$
holds iff it is possible for some thread to try to acquire a lock on $r$ while
it holds a lock on $p$.  Let $\mustpreceedtcarrow$ denote the transitive
closure of $\mustpreceedarrow$.

We define an equivalence relation $\lockeq$ on $\predset$ as follows:
$p \lockeq r$ iff $\mustpreceedtc{p}{r} \wedge \mustpreceedtc{r}{p}$.
Note that any possible deadlock must involve an equivalence class of
this relation. We map all predicates in an equivalence class to the
same lock to avoid deadlocks.
In addition to the above, we establish a total ordering on all the locks,
and ensure that all lock acquisitions we add to a single edge are done
in an order consistent with the established ordering. (Note that the ordering
on the locks does not have to be total; a partial ordering is fine, as long
as any two locks acquired along a single edge are ordered.)  

\mypara{Improving The Solution} %
Our scheme can sometimes introduce \emph{redundant} locking.
E.g., assume that in the generated solution a
lock $\ell_1$ is always held whenever a lock $\ell_2$ is acquired. Then,
the lock $\ell_2$ is redundant and can be eliminated.
Similarly, if we have a predicate $\pred$ that is never falsified by any
statement in the library, then we do not need to acquire a lock for this
predicate. We can eliminate
such redundant locks as a final optimization pass over the generated solution.

\mypara{Using Reader-Writer Locks}
Note that a lock may be acquired on a predicate $\pred$ for one of two reasons
in the above scheme: either to ``preserve'' $\pred$ or to ``break'' $\pred$.
These are similar to read-locks and write-locks. 
Note that it is safe for multiple threads to simultaneously hold a
lock on the same predicate $\pred$ if they want to ``preserve'' it,
but a thread that wants to ``break'' $\pred$ needs an exclusive
lock. Thus, reader-writer locks can be used to improve concurrency,
but space constraints prevent a discussion of this extension.
However, since it is unsafe for a thread that
holds a read-lock on a predicate $\pred$ to try to acquire a
write-lock $\pred$, using this optimization also requires an extension
to the basic deadlock avoidance scheme.

Specifically, it is unsafe for a thread that holds a read-lock on a
predicate $\pred$ to try to acquire a write-lock $\pred$, as this can
lead to a deadlock. Hence, any acquisition of a lock on a predicate
$\pred$ (to preserve it) should be made an exclusive (write) lock if
along some execution path it may be necessary to promote this lock to
a write lock before the lock is released.


\mypara{Generating Proofs} %
The sequential proof required by our scheme can be generated using
verification tools such as SLAM~\cite{BallRajamaniSpin00},
BLAST~\cite{henzinger02lazy,BLAST:POPL04} or Yogi~\cite{gulvani06synergy}.
Predicate abstraction~\cite{BallRajamaniSpin00} is a program analysis
technique that constructs a conservative, finite state abstraction of
a program with a large (possibly infinite) state space using a set of
predicates over program variables. 
Tools such as SLAM and BLAST use predicate abstraction to check if a
given program $P$ satisfies a specification $\phi$.  The tools
start with a simple initial abstraction and iteratively refine
the abstraction until the abstraction is rich enough to prove the
absence of a concrete path from the program's initial state to an
error state (or a real error is identified).

For programs for which verification succeeds, the final abstraction
produced, as well as the result of abstract interpretation using this
abstraction, serve as a good starting point for constructing the
desired proof.
The final abstraction consists of a predicate map $\predmap$ which maps
each program point to a set of predicates and as well as a mapping
from each program statement to a set of abstract predicate transformers
which together define an abstract transition system.
Furthermore, abstract interpretation utilizing this abstraction
effectively computes a formula $\prf(u)$ over the set of predicates
$\predmap(u)$ at every program point $u$ that conservatively
describes all states that can arise at program point $u$.

The map $\prf$ constitutes a proof of sequential correctness, as
required by our algorithm, and the predicate map $\predmap$ is a
valid basis for the proof. The map $\predmap$ can be extended into
a positive basis for the proof easily enough.
Since a minimal proof
can lead to better
concurrency control, approaches that produce a ``parsimonious proof''
(\eg, see~\cite{BLAST:POPL04}) are preferable.
A parsimonious proof is one that avoids the use of unnecessary predicates
at any program point.

\subsection{Complete Schema}
\label{sec:complete-schema}

We now present a complete outline of our schema for synthesizing concurrency
control.

\begin{enumerate}[(1)]
\item
   Construct a sequential proof $\prf$ that the library satisfies the given
   assertions in any sequential execution.
\item
   Construct positive basis $\predmap$ and an obligation mapping $\obliged$
   for the proof $\prf$.
\item
   Compute a map $\mayFalsify$ from the edges of the control graph to
   $\predset$, the range of $\predmap$, such that $\mayFalsify(e)$ (conservatively)
   includes all
   predicates in $\predset$ that may be falsified by the execution of $e$.
\item
   Compute the equivalence relation $\lockeq$ on $\predset$.
\item
   Generate a predicate lock allocation map $\lockmap:\predset \rightarrow \lockset$
   such that for any $\pred_1 \lockeq \pred_2$, we have
   $\lockmap(\pred_1) = \lockmap(\pred_2)$.
\item
  Compute the following quantities for every edge $e = \stmtedge{u}{s}{v}$,
  where we use $\lockmap(X)$ as shorthand for $\Set{ \lockmap(p) \vbar p \in
  X}$ and $\bothmap(u) = \predmap(u) \cup \obliged(u)$:
   \[
   \begin{array}{lll}
   \PLocksAcquired(e) & = &
      \lockmap(\bothmap(v)) \setminus \lockmap(\bothmap(u)) \\
   \PLocksReleased(e) & = &
      \lockmap(\bothmap(u)) \setminus \lockmap(\bothmap(v)) \\
   \BreakLocks(e) & = &
      \lockmap(\mayFalsify(e)) \setminus \lockmap(\bothmap(u)) \setminus
      \lockmap(\bothmap(v))
   \end{array}
   \]
\item
   We obtain the concurrency-safe library $\widehat{\lib}$ by transforming
   every edge $\stmtedge{u}{s}{v}$ in the library $\lib$ as follows:
   \begin{enumerate}
  \item
	$\forall~p~\in~\PLocksAcquired(\stmtedge{u}{s}{v})$,
     add an {\acqpred{$\lockmap(p)$}} before $s$; 
  \item
	$\forall~p~\in~\PLocksReleased(\stmtedge{u}{s}{v})$,
     add a {\relpred{$\lockmap(p)$}} after $s$;
  \item
     $\forall~p~\in\BreakLocks(\stmtedge{u}{s}{v})$,
     add an {\acqpred{$\lockmap(p)$}} before $s$ and a {\relpred{$\lockmap(p)$}} after $s$.
   \end{enumerate}
   All lock acquisitions along a given edge are added in an order consistent
   with a total order established on all locks.
\end{enumerate}

\subsection{Correctness}
\label{sec:correctness}

We now present a formal statement of the correctness claims for our
algorithm.
Let $\lib$ be a given library with a set of embedded assertions
satisfied by all sequential executions of $\lib$.
Let $\widehat{\lib}$ be the library obtained by augmenting $\lib$
with concurrency control using the algorithm presented in
Section~\ref{sec:complete-schema}.
Let $\prf$, $\predmap$, and $\obliged$ be the proof, the positive basis, and
the obligation map used to generate $\widehat{\lib}$.

Consider any concurrent execution of the given library $\lib$.
We say that a thread $t$ is \emph{safe} in a state $\state$ if
$\concsat{\state}{t}{\prf(u)}$ where $t$'s program-counter in state
$\state$ is $u$. We say that thread $t$ is \emph{active} in state $\state$ if
its program-counter is something other than the quiescent vertex.
We say that state $\state$ is \emph{safe} if every active thread $t$
in $\state$ is safe. 
Recall that a concurrent execution is of the form:
$\state_0 \tredge{\ell_0} \state_1 \tredge{\ell_1} \cdots \state_n$, where
each label $\ell_i$ is an ordered pair $(t,e)$ indicating that the
transition is generated by the execution of edge $e$ by thread $t$.
We say that a concurrent execution is safe if every state in the
execution is safe.
It trivially follows that a safe execution satisfies all assertions
of $\lib$.

Note that every concurrent execution $\pi$ of $\widehat{\lib}$ corresponds
to an execution $\pi'$ of $\lib$ if we ignore the transitions corresponding
to lock acquire/release instructions. We say that an execution $\pi$ of
$\widehat{\lib}$ is safe if the corresponding execution $\pi'$ of $\lib$
is safe. The goal of the synthesized concurrency control is to ensure that
only safe executions of $\widehat{\lib}$ are permitted.

We say that a transition $\state \tredge{(t,e)} \state'$ preserves the
basis of an active thread $t' \neq t$ whose program-counter in state $\state$ is $u$
if for every predicate $\pred \in \predmap(u)$ the following holds:
if $\concsat{\state}{t'}{\pred}$, then $\concsat{\state'}{t'}{\pred}$.
We say that a transition $\state \tredge{(t,e)} \state'$ ensures the
basis of thread $t$ if either $e = \edge{x}{y}$ is not the procedure
entry edge or for every active thread $t' \neq t$
whose program-counter in state $\state$ is $u$
and for every predicate $\pred \in \predmap(u)$
none of the predicates in $\predmap(y)$ are in $\obliged(u)$.

We say that a transition $\state \tredge{(t,e)} \state'$ is \emph{basis-preserving}
if it preserves the basis of every active thread $t' \neq t$ and ensures
the basis of thread $t$. A concurrent execution is said to be \emph{basis-preserving}
if all transitions in the execution are basis-preserving.

\begin{lem}
\label{lemma:seqcorr}
(a) Any basis-preserving concurrent execution of $\lib$ is safe.
(b) Any concurrent execution of $\widehat{\lib}$ corresponds to
a basis-preserving execution of $\lib$.
\end{lem}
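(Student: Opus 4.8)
The two parts call for quite different arguments, so I would treat them separately and prove each by induction on the length of the concurrent execution.

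For part~(a), the plan is to prove the contrapositive-flavored invariant directly: \emph{in a basis-preserving execution, every state is safe}. I would induct on the transition index. The base case holds because the initial state has all threads at the quiescent vertex (so no thread is active) or satisfies the library invariant $\prf(\qvertex)$. For the inductive step, consider a transition $\state \tredge{(t,e)} \state'$ and any thread $t'$ active in $\state'$ with program counter $u'$; I must show $\concsat{\state'}{t'}{\prf(u')}$. Split on whether $t' = t$ or $t' \neq t$. If $t' \neq t$, then $t'$'s program counter is unchanged ($u' = u$), and I must show $\prf(u)$ still holds from $t'$'s perspective. Here I would express $\prf(u)$ as a positive boolean combination of the basis predicates in $\predmap(u)$ (using that $\predmap$ is a \emph{positive} basis, so only conjunctions and disjunctions appear): since the transition preserves the basis of $t'$, every basis predicate true in $\state$ remains true in $\state'$, and monotonicity of positive formulas under making more predicates true (with none falsified) gives $\concsat{\state'}{t'}{\prf(u)}$. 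If $t' = t$, then $t$ just executed edge $e = \stmtedge{u}{s}{u'}$; by the induction hypothesis $\concsat{\state}{t}{\prf(u)}$ held, and the Hoare-triple soundness of the sequential proof (condition~(a) in the definition of a sequential proof, lifted to the concurrent setting via $\state[t]$) gives $\concsat{\state'}{t}{\prf(u')}$ --- \emph{provided} the relevant global state seen by $t$ was correctly described by $\prf(u)$. The subtle case is the procedure entry edge, where $t$ acquires the library-invariant predicates fresh; this is exactly what the ``ensures the basis'' clause in the definition of basis-preserving is designed to guarantee, so I would invoke it to conclude that $\prf(u')$ (the entry invariant) genuinely holds in $\state'$.

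For part~(b), I would show that every transition of $\libcc{}$ (after projecting away lock transitions) is basis-preserving in $\lib$. The argument is that the locking discipline synthesized in Section~\ref{sec:complete-schema} enforces the mutual-exclusion property the definition requires. Concretely, I would argue: (i) by construction, whenever a thread $t'$ is active at $u$, it holds the locks $\lockmap(\predmap(u))$ for exactly the relevant predicates (this is the stated invariant of the locking scheme, itself provable by induction over how $\PLocksAcquired$/$\PLocksReleased$ are placed on edges); and (ii) whenever a thread $t$ executes an edge $e$ that may falsify a predicate $\pred \in \predmap(u)$, the $\BreakLocks$/$\mayFalsify$ machinery forces $t$ to hold $\lockmap(\pred)$ across $s$. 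Since a predicate and its lock are shared, $t$ and $t'$ cannot simultaneously hold $\lockmap(\pred)$, so if $t'$ is active at $u$ holding the read side, $t$ cannot be executing a falsifying statement --- hence no relevant predicate of $t'$ is falsified, establishing the ``preserves the basis of every active $t'$'' clause. The ``ensures the basis of $t$'' clause for entry edges follows from the obligation map $\obliged$: any thread that invalidated a library-invariant predicate still holds its lock until it reestablishes the invariant, so when $t$ acquires the entry locks the invariant $\prf(u)$ must hold.

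The main obstacle is the entry-edge case tying parts~(a) and~(b) together: I must be careful that the informal correspondence between ``holding a lock'' and ``the predicate actually holds from the thread's perspective'' is watertight, especially across procedure boundaries where the obligation mapping $\obliged$ rather than $\predmap$ does the bookkeeping. I expect the cleanest route is to first establish the lock-holding invariant of~(b) as an auxiliary lemma, then feed its mutual-exclusion consequence into the positive-basis monotonicity argument of~(a). The deadlock-freedom construction (the $\lockeq$ merging and total lock order) is orthogonal to safety and need not enter this proof --- it only guarantees that these basis-preserving executions are not vacuously few, which is a liveness concern outside the scope of this lemma.
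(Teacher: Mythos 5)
Your proposal is correct and follows essentially the same route as the paper's own proof: part~(a) by induction on execution length, using positivity of the basis (monotonicity of conjunction/disjunction formulas under basis preservation) for threads not taking the step and the obligation map for the procedure-entry case, and part~(b) by arguing that the synthesized locking scheme maintains the invariant that a thread at $u$ holds the locks for $\predmap(u) \cup \obliged(u)$, so mutual exclusion rules out falsifying transitions. You are in fact somewhat more explicit than the paper (which glosses over the $t'=t$ Hoare-triple step and states the lock-holding invariant without proof), but the decomposition and key ideas are the same.
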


\begin{proof}
(a) We prove that every state in a basis-preserving execution of $\lib$
is safe by induction on the length of the execution.

Consider a thread $t$ in state $\state$ with program-counter value $u$.
Assume that $t$ is safe in $\state$. Thus,
$\concsat{\state}{t}{\prf(u)}$.
Note that $\prf(u)$ can be expressed in terms of the predicates in
$\predmap(u)$ using conjunction and disjunction.
Let $\textit{SP}$ denote the set of all predicates $\pred$ in $\predmap(u)$
such that
$\concsat{\state}{t}{\pred}$.
Let $\state'$ be any state such that
$\concsat{\state'}{t}{\pred}$ for every $\pred \in \textit{SP}$.
Then, it follows that $t$ is safe in $\state'$. Thus, it follows that
after any basis-preserving transition every thread that was
safe before the transition continues to be safe after the transition.

We now just need to verify that whenever an inactive thread becomes
active (representing a new procedure invocation), it starts off being
safe. We can establish this by inductively showing that every library invariant
must be satisfied in a given state or must be in $\obliged(u)$ for some
active thread $t$ at vertex $u$.

(b) Consider a concurrent execution of $\widehat{\lib}$. We need to show
that every transition in this execution, ignoring lock acquires/releases,
is basis-preserving. This follows directly from our locking scheme.
Consider a transition $\state \tredge{(t,e)} \state'$.
Let $t' \neq t$ be an active thread whose program-counter in state $\state$ is $u$.
For every predicate $\pred \in \predmap(u) \cup \obliged(u)$, our scheme ensures that
$t'$ holds the lock corresponding to $\pred$.
As a result, both the conditions for preserving basies are satisfied.
\end{proof}

\begin{thm}
\label{theorem:seqcorr}
(a) Any concurrent execution of $\widehat{\lib}$ satisfies every assertion of
$\lib$.
(b) The library $\widehat{\lib}$ is deadlock-free.
\end{thm}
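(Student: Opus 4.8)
The plan is to reduce part (a) directly to Lemma~\ref{lemma:seqcorr}. By Lemma~\ref{lemma:seqcorr}(b), any concurrent execution of $\widehat{\lib}$ corresponds (after ignoring lock acquire/release transitions) to a basis-preserving execution of $\lib$. By Lemma~\ref{lemma:seqcorr}(a), that corresponding execution is safe, meaning every state is safe and hence every active thread $t$ at a vertex $u$ satisfies $\concsat{\state}{t}{\prf(u)}$. First I would invoke the remark already made in the text that a safe execution trivially satisfies all assertions of $\lib$: for any transition $\state_i \longctrOf{(t,e)} \state_{i+1}$ where $e$ is labelled ``$\texttt{assert } \theta$'', the source state is safe, so $\concsat{\state_i}{t}{\prf(u)}$ holds where $u$ is $t$'s program counter; since $\prf$ is a proof, condition (b) of the definition of a sequential proof gives $\prf(u) \Rightarrow \assertion$ for the relevant assertion, and so the assertion holds in the context of $t$, which is exactly Definition~\ref{def:conc-spec-correctness}. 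The only subtlety is that the lock acquire/release transitions inserted by the transformation do not themselves carry assertions and do not alter the sequential state components, so discarding them loses no assertion obligations; I would state this explicitly to justify passing between $\widehat{\lib}$ and $\lib$.

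\textbf{Part (b): Deadlock-freedom.}

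For deadlock-freedom I would argue that the standard two conditions ruling out deadlock are both enforced by the construction. The key step is to show that the ``waits-for'' relation among locks at runtime is acyclic. Recall that $p \mustpreceedarrow r$ precisely captures the possibility that some thread holds the lock for $p$ while attempting to acquire the lock for $r$; the construction collapses every $\lockeq$-equivalence class (the strongly connected components of $\mustpreceedarrow$) to a single lock via $\lockmap$, so on the quotient the relation $\mustpreceedarrow$ becomes a strict partial order with no cycles through distinct locks. I would then combine this with the second ingredient: the schema mandates that all lock acquisitions added along any single edge are performed in an order consistent with a fixed global total order on locks. This rules out the remaining danger, namely two locks acquired in one atomic step in conflicting orders by different threads. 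Together, acyclicity of the cross-lock waits-for relation plus a consistent global acquisition order establish that no circular wait can arise, and hence no deadlock.

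The main obstacle I anticipate is making the deadlock argument fully rigorous, because a thread holds locks corresponding to all predicates relevant at its current vertex (the set $\lockmap(\bothmap(u))$), not just a single lock, and it may try to acquire a new lock while holding several. I would handle this by defining a thread's lock-holding set precisely from the invariant maintained in Lemma~\ref{lemma:seqcorr} — a thread at $u$ holds exactly $\lockmap(\bothmap(u))$ plus any transient break-locks — and then showing that any acquisition attempt corresponds to an edge $\stmtedge{u}{s}{v}$ acquiring a lock in $\lockmap(\bothmap(v) \cup \mayFalsify(e)) \setminus \lockmap(\bothmap(u))$, which is exactly the set related to the held locks by $\mustpreceedarrow$. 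A standard argument then shows that a cyclic wait among threads would induce a cycle in $\mustpreceedarrow$ spanning distinct locks, contradicting the fact that $\lockmap$ identifies all predicates within each $\lockeq$-class; the consistent total order on per-edge acquisitions eliminates the within-step case. I would also note that re-acquisition of an already-held lock (e.g.\ a break-lock on a predicate already in $\bothmap(u)$) is excluded by the $\setminus \lockmap(\bothmap(u))$ terms in the definitions of $\PLocksAcquired$ and $\BreakLocks$, so no thread blocks on a lock it already holds.
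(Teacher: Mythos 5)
Your proposal is correct and follows essentially the same route as the paper: part (a) is the same immediate reduction to Lemma~\ref{lemma:seqcorr}, and part (b) is the same proof by contradiction in which a circular wait among threads yields a $\mustpreceedarrow$-cycle among the underlying predicates, which the $\lockeq$-based lock merging forces into a single lock. The only difference is that you make explicit the role of the fixed total order for acquisitions added to a single edge (a case the paper's schema handles but its proof of (b) leaves implicit); this is extra care within the same argument rather than a different approach.
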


\begin{proof}

(a) This follows immediately from Lemma~\ref{lemma:seqcorr}.

(b) This follows from our scheme for merging locks and can be proved by
contradiction. Assume that a concurrent execution of $\widehat{\lib}$ produces
a deadlock. Then, we must have a set of threads $t_1$ to $t_k$ and a set of
locks $\ell_1$ to $\ell_k$ such that each $t_i$ holds lock $\ell_i$ and is
waiting to acquire lock $\ell_{i \oplus 1}$, where $i \oplus 1$ denotes
 $(i~\mathrm{mod}~k) + 1$. In particular,
$t_i$ must hold lock $\ell_i$ because it wants a lock on some predicate $p_i$,
and must be trying to acquire lock $\ell_{i \oplus 1}$ because of
some predicate $q_{i \oplus 1}$. Thus, we must have
$q_i \lockeq p_i$ and $p_i \mustpreceedarrow q_{i \oplus 1}$ for every $i$.
This implies that all of $p_i$ and $q_i$ must be in the same equivalence class
of $\lockeq$ and, hence, $\ell_1$ through $\ell_k$ must be the same, which is
a contradiction (since we must have $k > 1$ to have a deadlock).
\end{proof}

As mentioned earlier, our synthesis technique has a close connection to
Owicki-Gries~\cite{owicki76verifying} approach to verifying concurrent programs.
An alternative approach to proving Theorem~\ref{theorem:seqcorr}(a) would be to construct
a suitable Owicki-Gries style proof for the library.
We believe that this is doable.

\newcommand{\schedule}{\xi}
\newcommand{\instrvar}[1]{\ensuremath{\tilde{#1}}}
\newcommand{\std}[1]{\underline{#1}}
\newcommand{\instrsat}[2]{{#1} \models_i {#2}}
\newcommand{\writemap}{\mathfrak{mf}}
\newcommand{\LPstmt}{\ensuremath{\mathcal{LP}}}

\section{Handling 2-State Assertions}
\label{sec:twostate}

The algorithm presented in the previous section can be extended to handle 
2-state assertions via a simple program transformation that allows us to treat 
2-state assertions (in the original program) as single-state assertions
(in the transformed program). We augment the set of local variables with
a new variable $\instrvar{v}$ for every (local or shared)
variable $v$ in the original program and add a primitive statement
$\LPstmt{}$ at the entry of every procedure, whose execution essentially
copies the value of every original variable $v$ to the corresponding new
variable $\instrvar{v}$.

Let $\std{\state'}$ denote the projection of a transformed program
state $\state'$ to a state of the original program obtained by
forgetting the values of the new variables.
Given a 2-state assertion $\spec$, let $\instrvar{\spec}$ denote the
single-state assertion obtained by replacing every $v^{in}$ by $\instrvar{v}$.
As formalized by the claim below, the satisfaction of a 2-state assertion
$\spec$ by executions in the original program corresponds to satisfaction
of the single-state assertion $\instrvar{\spec}$ in the transformed program.

\begin{lem}\hfill
\begin{enumerate}[\em(1)]
\item
A schedule $\schedule$ is feasible in the transformed program
iff it is feasible in the original program.

\item
Let $\state'$ and $\state$ be the states produced by a particular
schedule with the transformed and original programs, respectively.
Then, $\state = \std{\state'}$.

\item
Let $\exec'$ and $\exec$ be the executions produced by a particular
schedule with the transformed and original program, respectively.
Then,
$\exec$ satisfies a single-state assertion $\assertion$ iff
$\exec'$ satisfies it.
Furthermore,
$\exec$ satisfies a 2-state assertion $\spec$ iff
$\exec'$ satisfies the corresponding one-state assertion $\instrvar{\spec}$.
\end{enumerate}
\end{lem}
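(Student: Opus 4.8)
The plan is to prove parts (1) and (2) together by induction on the length of the schedule, and then derive part (3) from the resulting lockstep correspondence between the two executions. Throughout, I would fix the natural bijection between the edges of the two control graphs: the transformed program differs only in that $\LPstmt$ is executed at each procedure entry, so I fold $\LPstmt$ into the corresponding call edge, leaving the edge set (and hence the notion of a schedule) unchanged. The single structural fact I would repeatedly invoke is that $\LPstmt$ only reads original variables and writes the fresh snapshot variables $\instrvar{v}$; it never reads a snapshot variable, never touches the shared state or the locks, and (being an assignment) is never disabled.

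For the induction, the invariant I would carry is: a schedule prefix is feasible in the transformed program iff it is feasible in the original, and when feasible the resulting states satisfy $\state = \std{\state'}$. The base case holds because the initial state of the transformed program projects, under $\std{\cdot}$, onto the initial state of the original (the snapshot variables only carry default values, which $\std{\cdot}$ forgets). For the step, I would case on the executed edge. For an edge other than a call edge the labelling statement is identical in both programs and refers only to original variables and locks; since $\state = \std{\state'}$ agree on original variables, PCs, and lock state, the statement is enabled in one iff enabled in the other, produces identical original-variable/PC/lock effects, and leaves the $\instrvar{v}$ untouched, so $\std{\cdot}$ is preserved. For a call edge, the transformed program additionally runs $\LPstmt$, which only writes snapshot variables and so affects neither enabledness nor the projected state. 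The one point needing care is the nondeterministic parameter assignment on call edges: I would match the two runs by resolving this nondeterminism identically, which is legitimate precisely because the two programs enable exactly the same set of choices. This establishes (1) and (2) simultaneously.

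For part (3), the single-state case is immediate: by Definition~\ref{def:conc-spec-correctness} an assertion $\assertion$ on edge $e$ must hold in the pre-state of every $(t,e)$-transition, and since $\state_i = \std{\state'_i}$ and $\assertion$ mentions no snapshot variable, $\concsat{\state_i}{t}{\assertion}$ iff $\concsat{\state'_i}{t}{\assertion}$; the two executions satisfy $\assertion$ in lockstep. The crux of the 2-state case is a \emph{frozen-snapshot} invariant: in the transformed execution, at any state $\state'_i$, thread $t$'s snapshot variable $\instrvar{v}$ holds the value of $v$ in the corresponding $t$-entry state $\instate$. I would prove this by noting that $\instrvar{v}$ is written only by $\LPstmt$ on a call edge, that $\LPstmt$ (running after the parameter assignment) records exactly the entry values $\instate$, that no intervening $t$-entry occurs between $\instate$ and $\state'_i$ by definition of the corresponding entry state, and that no other thread can write $t$'s local variables. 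Given this, evaluating $\instrvar{\spec}$ at $\state'_i[t]$ replaces each $v^{in}$ by the value of $v$ recorded in $\instate$ (via the snapshot $\instrvar{v}$), while the live occurrences of $v$ take their current value from $\state_i[t]$ by (2); hence $\concsat{\state'_i}{t}{\instrvar{\spec}}$ iff $\seqsat{(\instate[t],\state_i[t])}{\spec}$ iff $\concsat{(\instate,\state_i)}{t}{\spec}$, and summing over all transitions gives the claim.

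I expect the main obstacle to be the frozen-snapshot invariant, since it is where the design of the transformation (using thread-local snapshot variables rather than shared ones) is essential: storing the snapshot locally is exactly what makes it immune to interference by concurrent threads and guarantees it faithfully records $\instate$ for the duration of the invocation. A secondary technical point is the edge correspondence together with the matching of nondeterministic parameter choices, which is what makes ``the state produced by a particular schedule'' well defined across the two programs.
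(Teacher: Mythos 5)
Your proof is correct. Note that the paper itself states this lemma without proof, treating it as immediate from the construction of the transformed program, so there is no official argument to compare against; your write-up supplies exactly the details the paper leaves implicit. The two points you single out are indeed the ones that need to be made explicit for the lemma to even be well-posed and true: folding $\LPstmt{}$ into the call edge so that the two programs share a common notion of schedule (and resolving the nondeterministic parameter choice identically so that ``the state produced by a particular schedule'' is meaningful), and the frozen-snapshot invariant, which rests on the snapshot variables $\instrvar{v}$ being thread-local and written only at the corresponding $t$-entry, hence immune to interference for the duration of the invocation.
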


\paragraph{Synthesizing concurrency control.}
We now apply the technique discussed in Section~\ref{sec:pres_assn}
to the transformed program to synthesize concurrency control that preserves
the assertions transformed as discussed above.
It follows from the above Lemma that this concurrency control, used with
the original program, preserves both single-state and two-state assertions.

\section{Implementation}

We have built a prototype implementation of our algorithm.
Our implementation takes a sequential library and its assertions as input.
It uses a pre-processing phase to combine the library with a harness
that simulates the execution of any possible sequence of library calls
to get a complete C program.
(This program corresponds to the control graph described in Section~\ref{sec:problem}.)
It then uses a verification tool to generate a proof of correctness for the
assertions in this program.
We use the predicate-abstraction based software verification tool Yogi
described in \cite{Yogi:TSE} to generate the required proofs.
We modified the verifier to emit the proof from the final abstraction,
which associates every program point with a boolean formula over predicates.
It then uses the algorithm presented in this paper to synthesize concurrency
control for the library.
It utilizes the theorem prover Z3~\cite{Z3:TACAS} to identify the statements
in the program whose execution may falsify relevant predicates.

\setlength{\tabcolsep}{2pt}
\begin{table}[!t]
\centering
\small{
  \begin{tabular}{|l|p{4in}|r|}
  \hline
  \textbf{Library} & \textbf{Description}\\\hline
  \hline
  $compute.c$  
  & See \Fref{fig:motivating-example}
  \\\hline
  $reduce.c$ 
  & See \Fref{fig:lzty-sep-example}
  \\\hline
  $increment.c$ 
  & See \Fref{fig:control-flow-example}
  \\\hline
  $average.c$
  & Two procedures that compute the running sum and average of a sequence of numbers
  \\\hline
  $device\_cache.c$
  & One procedure that reads data from a device and caches the data
  for subsequent reads~\cite{shaz09calculus}. The specification
  requires quantified predicates.
  \\\hline
  $server\_store.c$
  & A library derived from a Java implementation of Simple
  Authentication and Security Layer (SASL). The library stores
  security context objects for sessions on the server side.
  \\\hline
\end{tabular}
}
\caption{Benchmarks used in our evaluation.}
\label{tab:benchmarks} 
\end{table}

We used a set of benchmark programs to evaluate our approach (Table ~\ref{tab:benchmarks}).
We also applied our technique manually to two real world libraries,
a device cache library ~\cite{shaz09calculus},
and a C implementation of the Simple Authentication and Security Layer (SASL).
The proofs for the device cache library and the SASL library
require quantified predicates, which were beyond the scope of the verifier we used.

In all these programs, the concurrency control scheme we synthesized
was identical to what an experienced programmer would generate.
The concurrency control we synthesized required one lock for all libraries,
with the exception of the SASL library, where our solution uses two locks.
Our solutions permit more concurrency as compared to a naive solution that uses one
global lock or an atomic section around the body of each procedure.
For example, in case of the server store library, our
scheme generates smaller critical sections and identifies a larger
number of critical sections that acquire different locks as compared
to the default implementation.
For these examples, the running time of our approach is
dominated by the time required to generate the proof; the time
required for the synthesis algorithm was negligible.

The source code for all our examples and their concurrent versions are
available online at~\cite{wypiwyg-examples}.
Note that our evaluation studies only small programs. 
We leave a more detailed evaluation of our approach as future work.

\newcommand{\ievt}{\textit{inv}}
\newcommand{\revt}{\textit{r}}
\newcommand{\RetVal}{\ensuremath{\mathit{ret}}}
\newcommand{\RPstmt}{\ensuremath{\mathcal{RP}}}
\newcommand{\retpred}{\phi}
\newcommand{\retvarp}{\ensuremath{\mathit{ret}'}}
\newcommand{\valmap}{\theta}

\newcommand{\sched}{\zeta}

\section{Concurrency Control For Linearizability}
\label{sec:linearizability}

\subsection{The Problem}

In the previous section, we showed how to derive concurrency
control to ensure that each procedure satisfies its sequential
specification even in a concurrent execution. 
However, this may still be too permissive,
allowing interleaved executions that produce counter-intuitive results
and preventing compositional reasoning in clients of the library. 
E.g., consider the procedure {\tt Increment} shown in
Fig.~\ref{fig:increment}, which increments a shared variable {\tt x}
by {\tt 1}.  The figure shows the concurrency control derived using
our approach to ensure specification correctness.  Now consider a
multi-threaded client that initializes {\tt x} to {\tt 0} and invokes
{\tt Increment} concurrently in two threads.  It would be natural to
expect that the value of {\tt x} would be {\tt 2} at the end of any
execution of this client.  However, this implementation permits an
interleaving in which the value of {\tt x} at the end of the execution
is {\tt 1}: the problem is that both invocations of {\tt Increment}
individually meet their specifications, but the cumulative effect
is unexpected\footnote{
We conjecture that such concerns do not arise when the specification
does not refer to global variables. For instance, the specification
for our example in Fig.~\ref{fig:motivating-example} does not refer
to global variables, even though the implementation uses global
variables.}.

\begin{figure}[!t]
\centering
\lstset{language=C}
\lstset{xleftmargin=15pt}
\lstset{tabsize=2}
\lstset{float=t}
\lstset{basicstyle={\ttfamily\footnotesize}}
\lstset{numbers=left, stepnumber=1,numbersep=4pt}
\lstset{emph={acquire,release},emphstyle=\color{blue}, emph={[2]Increment}, emphstyle={[2]\color{red}}}
\begin{lstlisting}[mathescape]
int x = 0;
$//@ensures~x == x^{in} + 1 \wedge returns~x$
Increment() {
	int tmp;
	acquire($l_{(x==x^{in})}$); tmp = x; release($l_{(x==x^{in})}$);
	tmp = tmp + 1;
	acquire($l_{(x==x^{in})}$); x = tmp; release($l_{(x==x^{in})}$);	
	return tmp;
}	
\end{lstlisting}
\caption{\label{fig:increment} 
A non-linearizable implementation of the procedure {\tt Increment}}
\end{figure}

This is one of the difficulties with using pre/post-condition specifications
to reason about concurrent executions.

One solution to this problem is to apply concurrency control synthesis
to the code (library) that calls {\tt Increment}.
The synthesis can then detect the potential for interference between the
two calls to {\tt Increment} and prevent them from happening concurrently.
Another possible solution, which we explore in this section, is for the
library to guarantee a stronger correctness criteria called
\emph{linearizability}~\cite{HerlihyWing}. \lzty\ gives the illusion
that in any concurrent execution, (the sequential specification of)
every procedure of the library appears to execute
\emph{instantaneously} at some point between its call and return.
This illusion allows clients to reason about the behavior of
concurrent library compositionally using its sequential specifications.
In this section, we show how our approach presented earlier for synthesizing
a \emph{logical concurrency control} can be adapted to derive
concurrency control mechanisms that guarantee linearizability.

\subsubsection{Linearizability}

We now extend the earlier notation to define linearizability.
Linearizability is a property of the library's externally observed behavior.
A library's interaction with its clients can be described in
terms of a \emph{history}, which is a sequence of events, where each event
is an \emph{invocation} event or a \emph{response} event.
An invocation event is a tuple consisting of the procedure invoked,
the input parameter values for the invocation, as well as a unique identifier.
A response event consists of the identifier of a preceding invocation event,
as well as a return value. Furthermore, an invocation event can have
at most one matching response event. A complete history has
a matching response event for every invocation event.
Note that an execution, as defined in Section~\ref{sec:problem}, captures
the internal execution steps performed during a procedure execution.
A history is an abstraction of an execution that captures only procedure
invocation and return steps.

A sequential history is an alternating sequence $\ievt_1, \revt_1, \cdots, \ievt_n, \revt_n$ of
invocation events and corresponding response events.  We abuse our earlier
notation and use $\state+\ievt_i$ to denote an entry state corresponding to a
procedure invocation consisting of a valuation $\state$ for the library's
global variables and a valuation $\ievt_i$ for the invoked procedure's formal
parameters.  We similarly use $\state+\revt_i$ to denote a procedure exit
state with return value $\revt_i$.  Let $\state_0$ denote the value of the
globals in the library's initial state. Let $\spec_i$ denote the specification
of the procedure invoked by $\ievt_i$. 
A sequential history is \emph{legal} if there exist valuations $\sigma_i$, $1
\leq i \leq n$, for the library's globals such that
$\seqsat{(\state_{i-1}+\ievt_i,\state_i+\revt_i)}{\spec_i}$ for $1 \leq i \leq
n$.

A complete interleaved history $H$ is {\em linearizable} if
there exists some legal sequential history $S$ such that
(a) $H$ and $S$ have the same set of invocation and response events and
(b) for every return event $\revt$ that precedes an invocation event $\ievt$
in $H$, $\revt$ and $\ievt$ appear in that order in $S$ as well.
An incomplete history $H$ is said to be linearizable if the complete
history $H'$ obtained by appending some response events and omitting some
invocation events without a matching response event is linearizable.

Finally, a library $\lib$ is said to be linearizable if every 
history produced by $\lib$ is linearizable. 

The concept of a \emph{linearization point} is often used in explanations
and proofs of correctness of linearizable algorithms. Informally, a linearization
point is a point (or control-flow edge) inside the procedure such that the
procedure appears to execute atomically when it executes that point. Our
eventual goal is to parameterize our synthesis algorithm with a linearization
point specification (a description of the point or points we wish to serve as
the linearization point). In this paper, however, we treat the procedure entry
edge as the linearization point and will refer to it as the linearization point.

\subsubsection{Implementation As A Specification and Logical Serializability}

The techniques we present in this section actually guarantee linearizability
with respect to the given sequential implementation (\ie, treating the sequential
implementation as a sequential specification). 
In particular, this approach guarantees that the concurrent execution will
return the same values as some sequential execution.
(The word \emph{atomicity} is sometimes used to describe this behavior.)
Such an approach has both advantages as well as disadvantages.
The advantage is that the technique is more broadly applicable, in practice,
as it does not require a user-provided specification. The disadvantage is that,
in theory, the sequential implementation may be more restrictive than the
intended specification. Hence, preserving the sequential implementation behavior
may unnecessarily restrict concurrency.

The properties of atomicity and linearizability relate to the externally
observed behavior of the library (\ie, the behavior as seen by clients of the
library). The implementation technique we use also guarantees certain 
properties about the internal (execution) behavior of the library, which we
explain now.

Recall that an execution, as defined in Section~\ref{sec:problem}, captures
the internal execution steps performed during a procedure execution
while a history is an abstraction of an execution that captures only procedure
invocation and return steps.

Recall that every transition $\sigma \longstrOf{(t,e)} \sigma'$ is labelled
by a pair $(t,e)$, indicating that the transition was created by the
execution of edge $e$ by thread $t$. We refer to a pair of the form
$(t,e)$ as a \emph{step}. A schedule $\sched$ is a sequence of steps
$\ell_1, \cdots, \ell_k$. We say that a schedule $\ell_1, \cdots, \ell_k$
is \emph{feasible} if there exists an execution
$\sigma_0 \ctrOf{\ell_1} \sigma_1 \cdots \ctrOf{\ell_k} \sigma_k$, where
$\sigma_0$ is the initial program state.
Given an execution $\pi$ = $\sigma_0 \ctrOf{\ell_1} \sigma_1 \cdots \ctrOf{\ell_k} \sigma_k$,
the sub-schedule of $t$ in $\pi$  is the sequence $\ell_{s_1}, \cdots, \ell_{s_n}$
of steps executed by $t$ in $\pi$.

A procedure invocation $t_1$ is said to precede another procedure
invocation $t_2$ in an execution if $t_1$ completes before $t_2$
begins.

Two complete executions are said to be observationally-equivalent if
they consist of the same set of procedure invocations and for each
procedure invocation the return values are the same in both executions.
An execution $\pi_1$ is said to be a permutation of another execution $\pi_2$ if for
every thread (procedure invocation) $t$ the sub-schedule of $t$ in $\pi_1$
and $\pi_2$ are the same.
An execution $\pi_1$ is said to be topologically consistent with another
execution $\pi_2$ if for every pair of procedure invocations $t_1$
and $t_2$, if $t_1$ precedes $t_2$ in $\pi_1$ then $t_1$ precedes $t_2$
in $\pi_2$ as well.

Our goal is to synthesize a concurrency control mechanism that permits
only executions that are observationally-equivalent, topologically
consistent, permutations of sequential executions.
We note that this concept is similar to various notions of serializability~\cite{BOOK:WV01}
(commonly used in database transactions).
The new variant we exploit may be thought of as \emph{logical} serializability:
corresponding points in the compared executions satisfy equivalence with respect
to certain predicates of interest, as determined by the basis. 

\subsubsection{Terminology}

In this section, we will use a modified notion of basis introduced in
Section~\ref{sec:pres_assn}. 

The key idea we explore in this paper is that of precisely characterizing
what is \emph{relevant} to a thread at a particular point and using this
information to derive a concurrency control solution. In the previous sections,
we captured the relevant information as an invariant or set of predicates
(the basis). In this section, we will find it necessary to mark certain values
(\eg, the value of a variable at a particular program point) as relevant as
well. In order to seamlessly reason about such relevant values (\eg, of type integer)
along with relevant predicates, we utilize \emph{symbolic} predicates to encode the
relevance of values.

In the sequel, note that we consider two predicates to be equal only if
they are syntactically equal.

\label{sec:symbolic}

A symbolic predicate is one that utilizes auxiliary (logical) variables.
As an example, given program variable {\tt x} and a logical variable
$w$, we will make use of predicates such as ``{\tt x} = $w$''. Such
symbolic predicates can be manipulated just like normal predicates
(e.g., in computing weakest-precondition). Conceptually, such a symbolic
predicate can be interpreted as a short-hand notation for the (possibly
infinite) family of predicates obtained by replacing the logical variable
$w$ by every possible value it can take. Thus, if {\tt x} and $w$ are of
type $T$, then the above symbolic predicate represents the set of predicates
$\{ \texttt{x} = c \vbar c \in T \}$. Note that this set of predicates
captures the value of {\tt x}: \ie, we know the value of every predicate in
this set iff we know the value of {\tt x}. This trick lets us use the
symbolic predicate ``{\tt x} = $w$'' to indicate that the value of {\tt x}
is relevant to a thread (and, hence, should not be modified by another thread).

Given any predicate $\pred$, let $\pred^*$ denote the set of predicates
it represents (obtained by instantiating the logical variables in $\pred$ as
explained above). (Thus, for a non-symbolic predicate $\pred$, $\pred^* = \{\pred\}$.)
We say that $\pred_1$ and $\pred_2$ are equivalent if $\pred_1^* = \pred_2^*$.
E.g., if $w$ ranges over all integers, then ``{\tt x} = $w$'' and  ``{\tt x} = $w+1$''
are equivalent predicates. Predicate equivalence can be used to simplify a
set of predicates or a basis. Given a set of predicates $S$, let $S^*$ represent
the set of predicates $\cup \{ \pred^* \vbar \pred \in S \}$. If $S_1^* = S_2^*$,
then it is safe, in the sequel, to replace the set $S_1$ by the set $S_2$
in a basis. This may be critical in creating finite representations of certain basis.

We say that a predicate $\pred$ is covered
by a set of predicates $S$ if $\pred$ can be expressed as a boolean formula
over the predicates in $S$ using conjunctions and disjunctions.

Recall that a \emph{predicate mapping} is a
mapping $\predmap$ from the vertices of the control graph to a set of predicates.

We say that a predicate mapping $\predmap$ is \emph{wp-closed} if for
every edge $e = \stmtedge{u}{s}{v}$ and for every $\pred \in \predmap(v)$,
(a) If $e$ is not the entry edge of a procedure, then the weakest-precondition
of $\pred$ with respect to $s$, $\mathit{wp(s,\pred)}$ is covered by $\predmap(u)$, and
(b) If $e$ is the edge $\edge{\qvertex}{\entryVertex{\Proc}}$ from the quiescent
vertex to the entry vertex of $\Proc$, then $\pred'$ is covered by $\predmap(\qvertex)$,
where $\pred'$ is obtained by replacing the occurrence of any procedure parameter
$x_i$ by a new logical variable $x_i'$.

Finally, we say that a predicate mapping is \emph{closed} if it is wp-closed and
if for every vertex $u$ and every predicate $\pred$ in $\predmap(u)$,
the negation of $\pred$ is also in $\predmap(u)$. The later condition helps us
reuse the algorithm description from Section~\ref{sec:pres_assn} in spite of some
differences in the context. 

Without loss of generality, we assume that each procedure $P_j$
returns the value of a special local variable $ret_j$.

\subsection{The Synthesis Algorithm}

We now show how our approach can be extended to guarantee linearizability
or atomicity.
We use a few tricky cases to motivate the adaptations we use of our
previous algorithm.

We start by characterizing non-linearizable interleavings permitted by
our earlier approach. We classify the interleavings based on the
nature of linearizability violations they cause.  For each class of
interleavings, we describe an extension to our approach to generate
additional concurrency control to prohibit these interleavings.
Finally, we prove correctness of our approach by showing that all
interleavings we permit are linearizable.

\subsubsection{Delayed Falsification} \label{sec:all-at-once}
The first issue we address, as well as the solution we adopt, are
not surprising from a conventional perspective. (This extension is,
in fact, the analogue of two-phase locking: \ie, the trick of
acquiring all locks before releasing any locks to avoid interference.)
Informally, the problem with the \texttt{Increment} example can
be characterized as ``dirty reads'' and ``lost updates'': the
second procedure invocation executes its linearization point later
than the first procedure invocation but reads the original value of
{\tt x}, instead of the value produced by the the first invocation.
Dually, the update done by the first procedure invocation is lost,
when the second procedure invocation updates {\tt x}.
From a logical perspective, the second invocation relies on the
invariant $x==x^{in}$ early on, and the first invocation 
breaks this invariant later on when it assigns to {\tt x}
(at a point when the second invocation no longer relies on the
invariant).
This prevents us from reordering the execution to construct an
equivalent sequential execution (while preserving the proof).
To achieve linearizability, we need to avoid such ``delayed
falsification''.

The extension we now describe prevents such interference by
ensuring that instructions that may falsify predicates and
occur after the linearization point appear to execute atomically 
at the linearization point. We achieve this by modifying the strategy to
acquire locks as follows.
\begin{iteMize}{$\bullet$}
\item We generalize the earlier notion of \emph{may-falsify}.  We say that a
  path \emph{may-falsify} a predicate $\pred$ if some edge in the path
  may-falsify $\pred$.  We say that a predicate $\pred$
  \emph{may-be-falsified-after} vertex $u$ if there exists some path from $u$
  to the exit vertex of the procedure that does not contain any linearization
  point and may-falsify $\pred$.

\item Let $\writemap$ be a predicate map such that for any vertex $u$,
$\writemap(u)$ includes any predicate that may-be-falsified-after $u$.

\item We generalize the original scheme for acquiring locks.
We augment every edge $e = \stmtedge{u}{S}{v}$ as follows:
\begin{enumerate}[(1)]
\item 
$\forall~\ell~\in~\lockmap(\writemap(v)) \backslash
\lockmap(\writemap(u))$, add an ``{\acqpred{$\ell$}}''
before $S$
\item 
$\forall~\ell\in~\lockmap(\writemap(u)) \backslash
\lockmap(\writemap(v))$, add an ``{\relpred{$\ell$}}''
after $S$
\end{enumerate}
\end{iteMize}\vspace{3 pt}

\noindent This extension suffices to produce a linearizable implementation
of the example in Fig.~\ref{fig:increment}.

\newcommand{\red}[1]{\textcolor{red}{#1}}
\addtolength{\subfigbottomskip}{-5pt}
\addtolength{\subfigtopskip}{-5pt}
\addtolength{\subfigcapmargin}{-5pt}
\begin{figure}[!t]
\centering
\begin{minipage}[h]{5.1in}
    \hspace{-2em}
    \subfigure[][]{%
        \label{fig:rvi-ex-a}
        \begin{minipage}[b]{0.1in}
        {\footnotesize {\tt
        \begin{ntabbing}
\reset
1\=12\=12\=123\=123\=123\=123\=123\=123\=123\=123\=123\kill
\kw{int} x, y;                                     \\ 
\kw{IncX}() \{                                     \\ 
\>\kwb{acquire($l_{x==x^{in}}$);}                \\ 
\>x = x + 1;                                       \\ 
\>($ret_{11}$,$ret_{12}$)=(x,y);                   \\ 
\>\kwb{release($l_{x==x^{in}}$);}                \\ 
\}                                                 \\ 
\reset                                                
\kw{IncY}() \{                                     \\ 
\> \kwb{acquire($l_{y==y^{in}}$);}               \\ 
\> y = y + 1;                                      \\ 
\> ($ret_{21}$,$ret_{22}$)=(x,y);                  \\ 
\> \kwb{release($l_{y==y^{in}}$);}               \\ 
\}                                                 \\
        \end{ntabbing}
        }}
        \end{minipage}
    }%
    \hspace{-1em}
    \subfigure[][]{%
        \label{fig:rvi-ex-b}
        \begin{minipage}[b]{0.1in}
        {\footnotesize {\tt 
        \begin{ntabbing}
1\=12\=12\=123\=123\=123\=123\=123\=123\=123\=123\=123\kill
\kw{int} x, y;                                                             \\
{\scriptsize $@ensures~x=x^{in}+1$} \\
{\scriptsize $@returns~(x, y)$}                                \\
\kw{IncX}() \{                                                             \\
\red{[$ret'_{11}$==x+1$~\wedge$~$ret'_{12}$==y]}                             \\
\> $\LPstmt{}:$ x = x$^{in}$                                               \\
\red{[x==x$^{in}$ $\wedge$ $ret'_{11}$==x+1~$\wedge$ $ret'_{12}$=y]}                                             \\
\> x = x + 1;                                                              \\
\red{[x==x$^{in}$+1 $\wedge$ $ret'_{11}$==x~$\wedge$ $ret'_{12}$= y]}                                            \\
\> ($ret_{11}$,$ret_{12}$)=(x,y);                                          \\
\red{[x==x$^{in}$+1 $\wedge$ $ret_{11}$==$ret'_{11}$} \\
\red{$\wedge$ $ret_{12}$==$ret'_{12}$]}                                   \\
\}                                                                         \\
        \end{ntabbing}
        }}
        \end{minipage}
    }%
    \hspace{-2em}
    \subfigure[][]{%
        \label{fig:rvi-ex-c}
        \begin{minipage}[b]{0.1in}
        {\footnotesize {\tt
        \begin{ntabbing} 
1\=12\=12\=123\=123\=123\=123\=123\=123\=123\=123\=123\kill
\kw{int} x, y;                                         \\
\kw{IncX}() \{                                         \\
\> \kwb{acquire($l_{\textit{merged}}$);}               \\
\> x = x+1;                                            \\
\> ($ret_{11}$,$ret_{12}$)=(x,y);                      \\
\> \kwb{release($l_{\textit{merged}}$);}               \\
\}                                                     \\
\kw{IncY}() \{                                         \\
\> \kwb{acquire($l_{\textit{merged}}$);}               \\
\> y = y+1;                                            \\
\> ($ret_{21}$,$ret_{22}$)=(x,y);                      \\
\> \kwb{release($l_{\textit{merged}}$);}               \\
\}                                                     \\
        \end{ntabbing}
        }}
        \end{minipage}
    }%

\caption[Return Value Interference]{%
\label{fig:lzty-sep-example}
An example illustrating return value interference.  Both procedures return
{\tt (x,y)}.  $ret_{ij}$ refers to the $j^{th}$ return variable of the
$i^{th}$ procedure. Figure~\ref{fig:rvi-ex-a} is a non-linearizable implementation synthesized using the approach described in Section~\ref{sec:pres_assn}. Figure~\ref{fig:rvi-ex-b} shows the extended proof of correctness of the procedure {\tt IncX} and Figure~\ref{fig:rvi-ex-c} shows the linearizable implementation.}
\end{minipage}
\end{figure}

\subsubsection{Return Value Interference}
We now focus on interference that can affect \emph{the actual value returned
by a procedure invocation}, leading to non-linearizable executions.

Consider  procedures {\tt IncX} and {\tt IncY} in Fig.~\ref{fig:lzty-sep-example},
which increment variables {\tt x} and {\tt y} respectively.
Both procedures return the values of {\tt x} \emph{and} {\tt y}.
However, the postconditions of {\tt IncX} (and {\tt IncY}) do not
\emph{specify anything about the final value of} {\tt y} (and {\tt x} respectively).
Let us assume that the linearization points of the procedures are their entry points.
Initially, we have $x = y = 0$.
Consider the following interleaving of a concurrent execution of the
two procedures.
The two procedures execute the increments in some order, producing the
state with $x = y = 1$. Then, both procedures return $(1,1)$.
This execution is non-linearizable because in any legal sequential
execution, the procedure executing second is obliged to return a value
that differs from the value returned by the procedure executing
first.
The left column in \Fref{fig:lzty-sep-example} shows the concurrency
control derived using our approach with previously described extensions.
This is insufficient to prevent the above interleaving.  This
interference is allowed because the specification for {\tt IncX} allows it to
change the value of {\tt y} arbitrarily; hence, a concurrent modification to
{\tt y} by any other procedure is not seen as a hindrance to {\tt IncX}.

To prohibit such interferences within our framework, we need to determine
whether the execution of a statement $s$ can potentially affect the return-value
of another procedure invocation. We do this by computing a predicate
$\retpred(\retvarp)$ at every program point $u$ that captures the relation
between the program  state at point $u$ and the value returned by the procedure
invocation eventually (denoted by $\retvarp$). We then check if the execution of
a statement $s$ will break predicate $\retpred(\retvarp)$, treating $\retvarp$
as a free variable, to determine if the statement could affect the return value of
some other procedure invocation.

Formally, we assume that each procedure returns the value of a special variable
$\RetVal$. (Thus, ``\texttt{return} $exp$'' is shorthand for ``$\RetVal =
exp$''.) We introduce a special auxiliary variable $\retvarp$. We say that
a predicate map $\predmap$ \emph{covers return statements} if for every
edge $\edge{u}{v}$ labelled by a return statement ``\texttt{return} $exp$''
the set $\predmap(u)$ covers the predicate $\texttt{{\retvarp} == \RetVal}$.
(See the earlier discussion in Section~\ref{sec:symbolic} about such symbolic
predicates and how they encode the requirement that the value of {\tt \RetVal}
at a return statement is relevant and must be preserved.)

By applying our concurrency-control synthesis algorithm to a closed basis
that covers return statements, we can ensure that no return-value interference
occurs.

The middle column in \Fref{fig:lzty-sep-example} shows the augmented sequential
proof of correctness of {\tt IncX}.
The concurrency control derived using our approach starting with this
proof is shown in the third column of Fig.~\ref{fig:lzty-sep-example}. The lock
$l_\textit{merged}$ denotes a lock obtained by merging locks
corresponding to multiple predicates simultaneously acquired/released.
It is easy to see that this
implementation is linearizable. Also note that if the shared variables
{\tt y} and {\tt x} were \emph{not} returned by procedures {\tt IncX} and
{\tt IncY} respectively, we will derive a locking scheme in which
accesses to {\tt x} and {\tt y} are protected by different locks,
allowing these procedures to execute concurrently. 

\subsubsection{Control Flow Interference}
An interesting aspect of our scheme is that it permits interference
that alters the control flow of a procedure invocation if it does
not cause the invocation to violate its specification.
Consider procedures {\tt ReduceX} and {\tt IncY} shown in
Fig.~\ref{fig:control-flow-example}.
The specification of {\tt ReduceX} is that it will produce a final state where
$x < y$, while the specification of {\tt IncY} is that it will increment
the value of $y$ by 1. {\tt ReduceX} meets its specification by setting
$x$ to be $y-1$, but does so \emph{only if} $x \geq y$.

\begin{figure}[!t]
\centering
\begin{multicols}{2}
\lstset{language=C}
\lstset{xleftmargin=15pt}
\lstset{tabsize=2}
\lstset{float=t}
\lstset{basicstyle={\ttfamily\footnotesize}}
\lstset{numbers=left, stepnumber=1,numbersep=4pt}
\lstset{emph={acquire,release},emphstyle=\color{blue}, emph={[2]ReduceX, IncY}, emphstyle={[2]\color{red}}}
\begin{lstlisting}[mathescape]
int x, y;
$//@ensures~y = y^{in} + 1$
IncY() {
	$[true]~\LPstmt{}:$ y$^{in}$ = y
	$[y == y^{in}]$ y = y + 1;
	$[y == y^{in} + 1]$ 
}
\end{lstlisting}
\columnbreak
\begin{lstlisting}[mathescape]
$//@ensures~x < y$
ReduceX() {
	$[true]~\LPstmt{}$
	$[true]$ if (x $\ge$ y) {
	$[true]$ 	x = y - 1;
			 }
	$[x < y]$ 
}
\end{lstlisting}
\end{multicols}
\caption{\label{fig:control-flow-example} An example illustrating
interference in control flow. Each line is annotated (in square braces)
with a predicate the holds at that program point.}
\end{figure}

Now consider a client that invokes {\tt ReduceX} and {\tt IncY} concurrently
from a state where $x = y = 0$.
Assume that the {\tt ReduceX} invocation enters the procedure.
Then, the invocation of {\tt IncY} executes
completely. The {\tt ReduceX} invocation continues, and does nothing since
$x < y$ at this point.

\Fref{fig:control-flow-example} shows a sequential proof and the
concurrency control derived by the scheme so far, assuming that the
linearization points are at the procedure entry.  A key point to note is that
{\tt ReduceX}'s proof needs only the single predicate $x < y$. The statement $y
= y + 1$ in {\tt IncY} does \emph{not falsify} the predicate $x < y$; hence,
{\tt IncY} does not acquire the lock for this predicate.
This locking scheme permits {\tt IncY} to execute concurrently with
{\tt ReduceX} and affect its control flow.  
While our approach guarantees that this control flow interference will
not cause assertion violations, proving linearizability in the presence
of such control flow interference, in the general case, is challenging
(and an open problem).

We now describe how our technique can be extended to prevent control
flow interference,
which suffices to guarantee linearizability.

We ensure that interference by one thread does not affect the execution
path another thread takes. We say that a basis $\predmap$ \emph{covers
the branch conditions} of the program if for every branch edge $\stmtedge{u}{s}{v}$,
the set $\predmap(u)$ covers the assume condition in $s$.
If we synthesize concurrency control using a \emph{closed} basis
$\predmap$ that covers the branch conditions, we can ensure that no
control-flow interference happens.

In the current example, this requires predicate $x \ge y$ to be
added to the basis for {\tt ReduceX}. As a result, {\tt ReduceX} will
acquire lock $l_{x \ge y}$ at entry, while {\tt IncY} will acquire the
same lock at its linearization point and release the lock after the
statement $y = y + 1$. It is easy to see that this
implementation is linearizable.

\subsubsection{The Complete Schema}

In summary, our schema for synthesizing concurrency control that
guarantees linearizability is as follows.

First, we determine a closed basis for the program that covers
all return statements and branch conditions in the program.
(Such a basis is the analogue of the proof and basis used in
Section~\ref{sec:pres_assn}. An algorithm for generating such a basis
is beyond the scope of this paper. Such a basis can be computed
by iteratively computing weakest-preconditions, but, in the general
case, subsumption and equivalence among predicates will need to
be utilized to simplify basis sets to ensure termination.)
We then apply the extended concurrency control synthesis algorithm
described in Section~\ref{sec:all-at-once}.

\subsection{Correctness}

The extensions described above to the algorithm of Sections~\ref{sec:pres_assn}
and~\ref{sec:twostate} for synthesizing concurrency control are sufficient to
guarantee linearizability, as we show in this section.

\newcommand{\truePred}[2]{\mathrm{TP}({#1},{#2})}
\newcommand{\trueBasisPred}[2]{\mathrm{TBP}({#1},{#2})}
\newcommand{\lschedule}{\zeta}

Let $\sigma$ be a program state.
We define $\trueBasisPred{\sigma}{t}$ to be the set
$\{ \pred \in (\predmap(u))^* \vbar \concsat{\sigma}{t}{\pred} \}$
where $t$'s program-counter in state $\sigma$ is $u$.
(See Section~\ref{sec:symbolic} for the definition of $S^*$ for any
set of predicates $S$.)

\begin{lem}
\label{lemma:monotone}
Let $\predmap$ be a wp-closed predicate map. Consider transitions
$\sigma_1 \longctrOf{(t,e)} \sigma_2$ and
$\sigma_3 \longctrOf{(t,e)} \sigma_4$.
If $\trueBasisPred{\sigma_1}{t} \supseteq \trueBasisPred{\sigma_3}{t}$,
then $\trueBasisPred{\sigma_2}{t} \supseteq \trueBasisPred{\sigma_4}{t}$.
\end{lem}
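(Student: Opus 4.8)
The plan is to prove this monotonicity statement by pushing both transitions into thread $t$'s sequential view and propagating the domination hypothesis across the single edge $e = \stmtedge{u}{s}{v}$, using weakest preconditions together with the positivity (monotonicity) built into the notion of coverage. Write $a = \sigma_1[t]$, $a' = \sigma_2[t]$, $b = \sigma_3[t]$, $b' = \sigma_4[t]$. Since $t$ executes the same edge in both transitions, the concurrent steps reduce to the sequential steps $a \strOf{e} a'$ and $b \strOf{e} b'$, and $t$'s program counter is $u$ in $\sigma_1,\sigma_3$ and $v$ in $\sigma_2,\sigma_4$. Thus the hypothesis $\trueBasisPred{\sigma_1}{t} \supseteq \trueBasisPred{\sigma_3}{t}$ reads: every $\pred \in (\predmap(u))^*$ with $\seqsat{b}{\pred}$ also satisfies $\seqsat{a}{\pred}$; and the goal reads: every $\pred \in (\predmap(v))^*$ with $\seqsat{b'}{\pred}$ also satisfies $\seqsat{a'}{\pred}$.

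First I would isolate the one auxiliary fact about coverage that the argument consumes: if a formula $\psi$ is a positive boolean combination (conjunctions and disjunctions only) of predicates drawn from a set $T$, then $\psi$ is \emph{monotone} in the truth values of those atoms; hence, if the atoms of $T$ satisfied by $b$ form a subset of those satisfied by $a$, then $\seqsat{b}{\psi}$ implies $\seqsat{a}{\psi}$. This is precisely the shape in which the domination hypothesis will be used.

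\emph{Main case ($e$ not a procedure-entry edge).} Fix $\pred \in (\predmap(v))^*$ with $\seqsat{b'}{\pred}$, say $\pred = \hat\pred[\,\cdot\,]$ for some $\hat\pred \in \predmap(v)$ with its logical variables instantiated. Since $s$ is deterministic where it proceeds and $b \strOf{e} b'$ yields a state satisfying $\pred$, we obtain $\seqsat{b}{\mathit{wp}(s,\pred)}$. Now $\mathit{wp}(s,\pred)$ equals $\mathit{wp}(s,\hat\pred)$ with the same instantiation applied, because $s$ does not touch logical variables and so instantiation commutes with $\mathit{wp}$; by wp-closedness clause~(a), $\mathit{wp}(s,\hat\pred)$ is covered by $\predmap(u)$, so after instantiation $\mathit{wp}(s,\pred)$ is a positive combination of predicates in $(\predmap(u))^*$. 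The monotonicity fact together with the domination hypothesis then gives $\seqsat{a}{\mathit{wp}(s,\pred)}$, and validity of the Hoare triple $\triple{\mathit{wp}(s,\pred)}{s}{\pred}$ with $a \strOf{e} a'$ yields $\seqsat{a'}{\pred}$, as required.

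\emph{Entry case and main obstacle.} For $e = \edge{\qvertex}{\entryVertex{\Proc}}$ there is no ordinary weakest precondition, since the edge installs nondeterministic parameter values; here I would use wp-closedness clause~(b) instead. The global state is preserved across the entry edge, so the domination hypothesis at $\qvertex$ concerns exactly the globals that survive into $\sigma_2,\sigma_4$. For $\pred \in (\predmap(v))^*$ with $\seqsat{b'}{\pred}$, replacing the formal parameters by their values turns $\pred$ into a predicate over globals which, by clause~(b), is a positive combination of predicates in $(\predmap(\qvertex))^*$; monotonicity then transports satisfaction from $b$'s globals to $a$'s globals, giving $\seqsat{a'}{\pred}$. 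The hard part is precisely the bookkeeping of this entry case: one must (i) read a call step as fixing the parameter values, so that $a'$ and $b'$ agree on parameters and the comparison is meaningful (consistent with invocation events carrying their arguments), and (ii) interpret clause~(b)'s coverage modulo instantiation of the fresh logical variables $x_i'$ against the symbolic predicates of $\predmap(\qvertex)$, so that, for example, a symbolic predicate $x = w$ supplies the needed concrete instances in $(\predmap(\qvertex))^*$. Once these conventions are pinned down, the entry case is structurally identical to the main case, with clause~(b) playing the role of clause~(a).
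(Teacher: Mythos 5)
Your proposal is correct and follows essentially the same route as the paper: wp-closedness makes $\mathit{wp}(S,\pred)$ a positive (conjunction/disjunction-only) combination of predicates in $\predmap(u)$, and monotonicity of positive boolean combinations transports the domination hypothesis across the edge. The paper's proof is just a terse version of your main case; your explicit treatment of the procedure-entry edge via clause~(b) and of the commutation of instantiation with $\mathit{wp}$ fills in details the paper leaves implicit, but does not change the argument.
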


\begin{proof}
Let $e$ be the edge $\stmtedge{u}{S}{v}$. Note that for every predicate
$\pred \in \predmap(v)$, the weakest-precondition of $\pred$ with
respect to the statement $S$ can be expressed in terms of the predicates
in $\predmap(u)$ using conjunction and disjunction (by definition of a
wp-closed predicate map). The result follows.
\end{proof}

Consider any concurrent execution $\pi_1$ produced by a schedule $\schedule$.
We assume, without loss of generality, that every procedure invocation
is executed by a distinct thread. Let $t_1, \dots, t_k$ denote the
set of threads which complete execution in the given schedule, ordered
so that $t_i$ executes its linearization point before $t_{i+1}$.  
We show that $\schedule$ is linearizable by showing that $\schedule$
is equivalent to a sequential execution of the specifications of the
threads $t_1, \dots, t_k$ executed in that order.

Let $\schedule_i$ denote a {\em projection} of schedule $\schedule$
consisting only of execution steps by thread $t_i$. Let $\lschedule$
denote the schedule $\schedule_1 \cdots \schedule_k$.

\begin{lem}
\label{lemma:correspondence}
$\lschedule_k$ is a feasible schedule. Furthermore, for any
corresponding execution steps $\sigma_j \longctrOf{(t,e)} \sigma_{j+1}$
and $\sigma'_k \longctrOf{(t,e)} \sigma'_{k+1}$ of the two executions, we
have $\trueBasisPred{\sigma_j}{t} \supseteq \trueBasisPred{\sigma'_k}{t}$.
\end{lem}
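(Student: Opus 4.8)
The plan is to prove both conjuncts of the lemma simultaneously by induction, processing the serialized schedule $\lschedule = \schedule_1 \cdots \schedule_k$ one thread at a time in linearization-point order, while comparing it step-by-step against the original interleaved execution $\pi_1$. The induction hypothesis I would carry is: the prefix of $\lschedule$ processed so far is feasible, and for every step $(t,e)$ already placed, the corresponding pre-states $\sigma_j$ (in $\pi_1$) and $\sigma'$ (in the serialized execution) satisfy $\trueBasisPred{\sigma_j}{t} \supseteq \trueBasisPred{\sigma'}{t}$. Since each thread $t_i$ executes exactly the sub-schedule $\schedule_i$ in both executions, the correspondence between steps is the obvious one, and feasibility of a serialized step will be extracted from domination together with the closure properties of the basis (branch-condition coverage forces $t_i$ onto the same control-flow path, and the lock operations remain consistent).

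For the inductive step \emph{within} a single thread, I would lean directly on Lemma~\ref{lemma:monotone}: once domination holds at the pre-state of $t_i$'s current step, it is preserved at the pre-state of $t_i$'s next step. The subtlety is that in $\pi_1$ other threads may run between two consecutive steps of $t_i$, whereas in the serialized execution $t_i$'s steps are contiguous. To bridge this gap I would invoke the basis-preservation guarantee of the synthesized concurrency control (Lemma~\ref{lemma:seqcorr}): while $t_i$ sits at a program point $u$ it holds the locks for every predicate in $\predmap(u)$, so no interleaved step by another thread can falsify a predicate relevant to $t_i$. Hence $\trueBasisPred{\cdot}{t_i}$ can only be preserved across the intervening steps in $\pi_1$, which keeps the $\supseteq$ relation intact before Lemma~\ref{lemma:monotone} is applied to the next step.

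The crux -- and the step I expect to be the main obstacle -- is the base case for each thread: showing that the global state at which $t_i$ executes its linearization point in $\pi_1$ dominates, with respect to $t_i$'s entry basis $\predmap(u_i)$ (where $u_i$ is $t_i$'s entry vertex), the global state at which $t_i$ begins in the serialized execution, namely the state left behind after $t_1, \dots, t_{i-1}$ have run to completion. Here the linearization-point ordering of $t_1, \dots, t_k$ is essential, as is the two-phase / ``delayed falsification'' extension of Section~\ref{sec:all-at-once}: because every predicate a thread may falsify after its linearization point is locked \emph{at} the linearization point (via $\writemap$), the falsifying effects of $t_1, \dots, t_{i-1}$ on $t_i$'s relevant predicates all appear to take place, in the correct order, before $t_i$'s linearization point in $\pi_1$. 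I would argue that these cumulative effects coincide with those produced by running $t_1, \dots, t_{i-1}$ to completion in isolation, which yields the required domination at $t_i$'s entry and closes the induction.
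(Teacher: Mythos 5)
Your proposal follows essentially the same route as the paper's proof: an induction over the steps of $\lschedule$ maintaining the domination invariant $\trueBasisPred{\sigma_j}{t} \supseteq \trueBasisPred{\sigma'_k}{t}$, with Lemma~\ref{lemma:monotone} propagating domination across each step of a thread, lock-based interference-freedom bridging the steps of other threads interleaved in $\pi_1$, and branch-condition coverage yielding feasibility of the serialized step. The only divergence is cosmetic: at the boundary between consecutive threads the paper chains the invariant through the preceding thread's final (quiescent) state rather than arguing directly about the cumulative effects of $t_1,\dots,t_{i-1}$, but both arguments rest on the same delayed-falsification/obligation mechanism and are equally informal at exactly that point.
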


\begin{proof}
Proof by induction over the execution steps of $\lschedule$.

The claim is trivially true for the first step of $\lschedule$, since
the initial state in the same in both executions.

Now, consider any pair of ``candidate'' successive execution steps
$\sigma'_{k-1} \longctrOf{(t,e')} \sigma'_k \longctrOf{(t,e)} \sigma'_{k+1}$
of $\lschedule$. That is, we assume, from our inductive hypothesis,
that the first execution step above is feasible, but we need to establish
that the second step is a feasible execution step.

Let $\sigma_{m-1} \longctrOf{(t,e')} \sigma_m$ and
$\sigma_j \longctrOf{(t,e)} \sigma_{j+1}$ be the two corresponding
execution steps in the original execution.

Our inductive hypothesis guarantees that
$\trueBasisPred{\sigma_m}{t} \supseteq \trueBasisPred{\sigma'_k}{t}$.
But any concurrent execution is guaranteed to be interference-free.
Hence, it follows that
$\trueBasisPred{\sigma_j}{t} \supseteq \trueBasisPred{\sigma_m}{t}$.
Hence, it follows that
$\trueBasisPred{\sigma_j}{t} \supseteq \trueBasisPred{\sigma'_k}{t}$.

Now, if $e$ is a conditional branch statement labelled with the statement
``assume $\pred$'', then we must have $\concsat{\sigma_j}{t}{\pred}$.
It follows that $\concsat{\sigma'_k}{t}{\pred}$. (This follows because
we use a basis that covers all branch conditions.) Thus, the second
candidate execution step of $\lschedule$ is indeed a feasible execution
step.

It then follows from Lemma~\ref{lemma:monotone} that 
$\trueBasisPred{\sigma_{j+1}}{t} \supseteq \trueBasisPred{\sigma'_{k+1}}{t}$.

Now, consider any pair of successive execution steps
$\sigma'_{k-1} \longctrOf{(t_{h-1},e')} \sigma'_k \longctrOf{(t_h,e)} \sigma'_{k+1}$
of $\lschedule$. Thus, we consider the first step executed by thread $t_h$
after thread $t_{h-1}$ executes its last step.

Note that
$\trueBasisPred{\sigma'_{k}}{t_{h-1}} = \trueBasisPred{\sigma'_{k}}{t_h}$
(since none of the basis predicates at the quiescent vertex involve thread-local
variables).  

Let $\sigma_p \longctrOf{(t_{h-1},e)} \sigma_{p+1}$ denote the corresponding, last,
execution step performed by $t_{h-1}$ in the interleaved execution.
Let $\sigma_j \longctrOf{(t_h,e)} \sigma_{j+1}$ denote the corresponding, first,
execution step performed by $t_h$ in the interleaved execution.
By the inductive hypothesis, 
$\trueBasisPred{\sigma_{p}}{t_{h-1}} \supseteq \trueBasisPred{\sigma'_{k}}{t_{h-1}}$.

Note that in the interleaved execution $p$ may be less than or greater than $j$:
$t_{h-1}$ may or may not have completed execution by the time $t_h$ performs its
first execution step. Yet, we can establish that
$\trueBasisPred{\sigma_{j}}{t_h} \supseteq \trueBasisPred{\sigma'_{k}}{t_{h-1}}$.
This is because no thread can execute a step that will change the value of any
predicate in $\trueBasisPred{\sigma_{p}}{t_{h-1}}$ between the last step of $t_{h-1}$
and the first step of $t_h$ (no matter how these two steps are ordered during execution).

\end{proof}

\begin{lem}
\label{lemma:retval}
For $t \in \{ t_1, \cdots, t_k \}$, the value returned by procedure invocation $t$ in
$\pi_1$ is the same as the value returned by $t$ in the sequential execution
$\pi_2$ corresponding to schedule $\lschedule$.
\end{lem}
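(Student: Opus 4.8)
The plan is to leverage Lemma~\ref{lemma:correspondence}, which already establishes that the reordered schedule $\lschedule$ is feasible and that at every corresponding pair of execution steps the true-basis predicates in the sequential execution are a subset of those in the interleaved execution. The key observation is that for each thread $t \in \{t_1,\dots,t_k\}$, the return value is the value of the special variable $\RetVal$ at the moment $t$ executes its return edge, and by our assumption the basis \emph{covers return statements}: at any return edge $\edge{u}{v}$ labelled ``\texttt{return}~$exp$'', the set $\predmap(u)$ covers the symbolic predicate $\texttt{{\retvarp} == \RetVal}$. Recall from Section~\ref{sec:symbolic} that this symbolic predicate encodes the relevance of the concrete value of $\RetVal$: knowing which instantiations of $\texttt{{\retvarp} == \RetVal}$ hold in a state is equivalent to knowing the value of $\RetVal$ in that state.

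First I would fix an arbitrary thread $t = t_h$ and locate its return step in both executions: let $\sigma_j \longctrOf{(t,e)} \sigma_{j+1}$ be the return step in the interleaved execution $\pi_1$ and $\sigma'_k \longctrOf{(t,e)} \sigma'_{k+1}$ the corresponding return step in $\pi_2$, where $e = \stmtedge{u}{S}{v}$ is the return edge. By Lemma~\ref{lemma:correspondence} we have $\trueBasisPred{\sigma_j}{t} \supseteq \trueBasisPred{\sigma'_k}{t}$. Next I would argue that this inclusion is in fact an equality when restricted to the instantiations of the relevant covering predicate. Since $\predmap(u)$ covers $\texttt{{\retvarp} == \RetVal}$, every instantiation $\texttt{c == \RetVal}$ (for $c$ a possible value) is a boolean combination of predicates in $(\predmap(u))^*$; hence membership of such instantiations in the true-basis set is determined entirely by the true-basis set. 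The inclusion $\trueBasisPred{\sigma_j}{t} \supseteq \trueBasisPred{\sigma'_k}{t}$ then forces the value of $\RetVal$ in $\sigma_j$ to agree with its value in $\sigma'_k$: precisely one instantiation $\texttt{c == \RetVal}$ holds in each state, and that instantiation must lie in both true-basis sets, so the two values $c$ coincide.

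More carefully, the subset relation on true-basis sets alone gives that any instantiation $\texttt{c == \RetVal}$ holding in $\sigma'_k$ also holds in $\sigma_j$; since exactly one such instantiation can hold in any single state (they are mutually exclusive), the unique witness in $\sigma'_k$ equals the unique witness in $\sigma_j$, yielding that $\RetVal$ has the same value in both states. Because the return value of invocation $t$ is by definition the value of $\RetVal$ at its return step, this establishes that $t$ returns the same value in $\pi_1$ and in $\pi_2 = \lschedule$.

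The main obstacle I anticipate is making precise the step that reads off equality of the concrete return value from the subset relation on the \emph{symbolic} basis. One must verify that the covering of $\texttt{{\retvarp} == \RetVal}$ by $\predmap(u)$ indeed transfers to covering of each concrete instantiation by $(\predmap(u))^*$ (this is exactly the content of the $S^*$ construction in Section~\ref{sec:symbolic}), and that the true-basis sets faithfully record the value of $\RetVal$ rather than merely some coarser information. Handling the free auxiliary variable $\retvarp$ correctly — treating it as ranging over all concrete values and observing that the family $\{\texttt{c == \RetVal}\}_c$ partitions the states by the value of $\RetVal$ — is the delicate part, but once the symbolic-predicate machinery is unwound, the equality of return values follows directly from Lemma~\ref{lemma:correspondence}.
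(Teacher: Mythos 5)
Your proposal is correct and follows essentially the same route as the paper's own proof: invoke Lemma~\ref{lemma:correspondence} at the return step to get $\trueBasisPred{\sigma_j}{t} \supseteq \trueBasisPred{\sigma'_k}{t}$, then use the fact that the basis covers return statements to transfer the instantiation $\texttt{c == \RetVal}$ from the sequential state to the interleaved one. Your additional care in unwinding the $S^*$ construction and noting that monotone (conjunction/disjunction) coverage is what lets truth propagate along the superset relation is a welcome tightening of a step the paper states more tersely, but it is the same argument.
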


\begin{proof}
Let $\sigma_j \longctrOf{(t,e)} \sigma_{j+1}$ and $\sigma'_k \longctrOf{(t,e)} \sigma'_{k+1}$
denote the execution of the return statements by $t$ in $\pi_1$ and $\pi_2$
respectively. it follows from Lemma~\ref{lemma:correspondence} that
$\trueBasisPred{\sigma_j}{t} \supseteq \trueBasisPred{\sigma'_k}{t}$.
Suppose that $t$ returns a value $c$ in the sequential execution $\pi_2$.
Note that we use a basis that covers all return statements. Hence,
the predicate $\texttt{c == \RetVal}$ must be in $\trueBasisPred{\sigma'_k}{t}$.
It follows that $\texttt{c == \RetVal}$ must be in $\trueBasisPred{\sigma_j}{t}$
as well. Hence, $t$ returns $c$ in $\pi_1$ as well.
\end{proof}

\begin{thm}
\label{theorem:lin-corr}
Given a library $\lib$ that is totally correct with respect to a given
sequential specification, the library $\widehat{\lib}$ generated by our
algorithm is linearizable with respect to the given specification.
\end{thm}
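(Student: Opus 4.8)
The plan is to assemble the three preceding lemmas into a proof of linearizability by exhibiting, for any feasible schedule $\schedule$ of $\widehat{\lib}$, a legal sequential history $S$ satisfying the two conditions in the definition of linearizability. Recall that the threads $t_1,\dots,t_k$ that complete execution are ordered by the order in which they execute their linearization points (the procedure entry edges), and that $\lschedule = \schedule_1 \cdots \schedule_k$ denotes the projected schedule that runs each completing thread's steps contiguously in this order. I would first invoke Lemma~\ref{lemma:correspondence} to conclude that $\lschedule$ is a feasible schedule, so it induces a genuine sequential execution $\pi_2$ in which the procedure invocations $t_1,\dots,t_k$ execute one at a time, each to completion, in linearization-point order. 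Since each procedure's sequential execution meets its specification (by the total-correctness hypothesis on $\lib$, and since lock acquire/release steps do not affect the sequential semantics), $\pi_2$ realizes a legal sequential history $S$.

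Next I would verify the two requirements for $S$ to witness linearizability of the history $H$ produced by $\pi_1$. For condition (a), I would observe that $\lschedule$ is by construction a permutation of $\schedule$ that retains exactly the steps of the completing threads, so $H$ and $S$ have the same invocation and response events; the return values agree by Lemma~\ref{lemma:retval}, which gives that each $t_i$ returns the same value in $\pi_1$ and in $\pi_2$. For condition (b), the real-time ordering constraint, I would argue that if a response of $t_i$ precedes an invocation of $t_j$ in $H$, then $t_i$'s entire execution (in particular its linearization point) precedes $t_j$'s invocation, hence precedes $t_j$'s linearization point; since $S$ orders threads by linearization-point order, $t_i$ precedes $t_j$ in $S$ as well. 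This establishes that $H$ is linearizable, and since $\schedule$ was arbitrary, every history of $\widehat{\lib}$ is linearizable.

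The main obstacle, and the step deserving the most care, is the passage from Lemma~\ref{lemma:correspondence} to the claim that $\pi_2$ is a \emph{legal} sequential history rather than merely a feasible execution. Lemma~\ref{lemma:correspondence} only guarantees feasibility of $\lschedule$ together with the containment $\trueBasisPred{\sigma_j}{t} \supseteq \trueBasisPred{\sigma'_k}{t}$ relating corresponding states; I must argue that this containment, combined with the fact that the basis covers return statements and branch conditions, forces each invocation in $\pi_2$ to see inputs consistent with a legal sequential history and to satisfy its specification. The subtle point is that the predicate-containment is one-directional (the interleaved execution satisfies at least as many basis predicates as the serialized one at corresponding steps), so I must check that this direction suffices: legality requires only that consecutive global-state valuations $\sigma_{i-1},\sigma_i$ satisfy each $\spec_i$, and the coverage of return statements (via the symbolic predicate $\retvarp == \RetVal$) is precisely what transfers the observed return values back, as already encapsulated in Lemma~\ref{lemma:retval}.

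Finally, I would note that the argument implicitly requires the incomplete-history clause of the linearizability definition: threads that have not completed in $\schedule$ are simply dropped (their invocations omitted), which is permitted, and this is why restricting attention to the completing threads $t_1,\dots,t_k$ loses no generality. I would state this reduction explicitly at the outset so that the subsequent construction of $S$ from only the completing threads is justified. With these pieces in place, the theorem follows by direct appeal to Lemmas~\ref{lemma:correspondence} and~\ref{lemma:retval}, and no further interference analysis is needed beyond what those lemmas already encapsulate.
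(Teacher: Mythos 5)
Your proposal is correct and follows essentially the same route as the paper: the paper's own proof is the one-liner ``Follows immediately from Lemma~\ref{lemma:retval},'' relying on the construction of $\lschedule$ and Lemmas~\ref{lemma:correspondence} and~\ref{lemma:retval} exactly as you assemble them. Your write-up merely makes explicit the steps the paper leaves implicit (feasibility and legality of $\pi_2$ via total correctness, the real-time ordering check via linearization-point order, and the incomplete-history reduction to completing threads), all of which match the paper's setup.
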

\begin{proof}
Follows immediately from Lemma~\ref{lemma:retval}.
\end{proof}

The above theorem requires \emph{total correctness} of the library in the
sequential setting. \emph{E.g.}, consider a procedure $P$ with a specification
\texttt{ensures x==0}. An implementation that sets \texttt{x} to be 1, and
then enters an infinite loop is \emph{partially} correct with respect to
this specification (but not totally correct). In a concurrent setting, this
can lead to non-linearizable behavior, since another concurrent thread can
observe that \texttt{x} has value 1, which is not a legally observable value
\emph{after} procedure $P$ completes execution.

\subsection{Discussion}
In this section, we have presented a logical approach to synthesizing
concurrency control to ensure linearizability/atomicity. In particular,
we use predicates to describe what is \emph{relevant} to ensure correctness
(or desired properties).
Predicates enable us to describe relevance in a more fine-grained fashion,
creating opportunities for more concurrency.

We believe that this approach is promising and that there is significant
scope for improving our solution and several interesting research directions
worth pursuing.
Indeed, some basic optimizations to the scheme presented may be critical
to getting reasonable solutions.
One example is an optimization relating to frame conditions, hinted at in
Section~\ref{sec:problem}. As an example, assume that $x > 0$ is an invariant
that holds true in between procedure invocations in a sequential execution.
(Thus, this is a library invariant.) A procedure that neither reads or writes
$x$ will, nevertheless, have the invariant $x > 0$ at every program point
(to indicate that it never breaks this invariant). Our solution, as sketched,
will require the procedure to acquire a lock on this predicate and hold it
during the entire procedure. However, this is not really necessary, and can
be optimized away.
In general, the invariant or the basis at any program point may be seen as
consisting of two parts, the \emph{frame} and the \emph{footprint}. The footprint
relates to predicates that are relevant and/or may be modified by the procedure,
while the frame simply indicates predicates that are irrelevant and left untouched
by the procedure. We need to consider only the footprint in synthesizing the
concurrency control solution. We leave fleshing out the details of such optimizations
as future work.

We conjecture that the extensions presented in this section to avoid control-flow
interference is not necessary to ensure linearizability. Indeed, note that if
we can ensure that any concurrent execution is observationally equivalent and
topologically equivalent to some sequential execution, this is sufficient. Our
current technique ensures that the concurrent execution is also a permutation of
the sequential execution: \ie, every procedure invocation follows the same execution
path in both the concurrent and sequential execution.
However, our current proof of correctness relies on this property.
Relaxing this requirement is an interesting open problem.

We believe that our technique can be adapted in a straight-forward fashion
to work with linearization points other than the procedure entry (as long as the
linearization point satisfies certain constraints). Different linearization points
can potentially produce different concurrency control solutions.

We also believe that with various of these improvements, we can synthesize the
solution presented in Fig.~\ref{fig:motivating-example} as a linearizable and
atomic implementation, starting with no specification whatsoever.

\section{Related Work}\label{sec:related}

{\bf Synthesizing Concurrency Control}: 
Vechev et al.~\cite{VechevEtAl:POPL2010} present an approach for synthesizing concurrency control for a concurrent program, given a specification in the form of assertions in the program. This approach, Abstraction Guided Synthesis, generalizes the standard counterexample-guided abstraction refinement (CEGAR) approach to verification as follows. The algorithm attempts to prove that the concurrent program satisfies the desired assertions. If this fails, an interleaved execution that violates an assertion is identified. This counterexample is used to either refine the abstraction (as in CEGAR) or to restrict the program by adding some atomicity constraints. An atomicity constraint indicates that a context-switch should not occur at a given program point (thus requiring the statements immediately preceding and following the program point to be in an atomic-block) or is a disjunction of such constraints. Having thus refined either the abstraction or the program, the algorithm repeats this process.

Our work has the same high-level goal and philosophy as Vechev et al.: derive a concurrency control solution automatically from a specification of the desired correctness properties. However, there are a number of differences between the two approaches. Before we discuss these differences, it is worth noting that the concrete problem addressed by these two papers are somewhat different: while our work focuses on making a sequential library safe for concurrent clients, Vechev et al. focus on adding concurrency control to a given concurrent program to make it safe. Thus, neither technique can be directly applied to the other problem, but we can still observe the following points about the essence of these two approaches.

Both approaches are similar in exploiting verification techniques for synthesizing concurrency control.  However, our approach decouples the verification step from the synthesis step, while Vechev et al. present an integrated approach that combines both. Our verification step requires only sequential reasoning, while the Vechev et al. algorithm involves reasoning about concurrent (interleaved) executions. Specifically, we exploit the fact that a sequential proof indicates what properties are critical at different program points (for a given thread), which allows us to determine whether the execution of a particular statement (by another thread) constitutes (potentially) undesirable interference.

We present a locking-based solution to concurrency control, while Vechev et al. present the solution in terms of atomic regions. Note that if our algorithm is parameterized to use a single lock (\ie, to map every predicate to the same lock), then the generated solution is effectively one based on atomic regions.

Raza et al.~\cite{RazaEtAl:ESOP2009} present an approach for automatically parallelizing a program that makes use of a separation logic proof. This approach exploits the separation logic based proof to identify whether candidate statements for parallelization access disjoint sets of locations. Like most classical approaches to automatic parallelization, this approach too relies on a data-based notion of interference, while our approach identifies a logical notion of interference.

Several papers~\cite{flanagan05automatic,cherem08inferring,emmi07allocation,mccloskey06autolocker,Hicks06LockInference,vaziri06associating}
address the problem of
inferring lock-based synchronization for atomic sections to guarantee atomicity.
These existing lock inference schemes identify
potential conflicts between atomic sections at the granularity of data items and
acquire locks to prevent these conflicts, either all at once or using a two-phase
locking approach. Our approach is novel in using a logical notion
of interference (based on predicates), which can permit more
concurrency.

\cite{solar08sketching} describes a sketching technique to add missing
synchronization by iteratively exploring the space of candidate programs
for a given thread schedule, and pruning the search space based on
counterexample candidates.
\cite{janjua06automaticCorrecting} uses model-checking to repair errors
in a concurrent program by pruning erroneous paths from the control-flow
graph of the interleaved program execution.
\cite{vechev08inferring} is a precursor to~\cite{VechevEtAl:POPL2010},
discussed above, that considers the tradeoff between increasing parallelism
in a program and the cost of synchronization. This paper allows users to
specify limitations on what may be used as the guard of conditional critical
regions (the synchronization mechanism used in the paper), thus controlling
the costs of synchronization.
\cite{deng02invariant} allows users to specify synchronization
patterns for critical sections, which are used to infer appropriate
synchronization for each of the user-identified
region. 
\noindent
Vechev \emph{et al.}~\cite{vechev08derivingLinearizable} address the
problem of automatically deriving linearizable objects with
fine-grained concurrency, using hardware primitives to achieve
atomicity. The approach is semi-automated, and requires the developer
to provide algorithm schema and insightful manual transformations.
Our approach differs from all of these techniques in exploiting
a proof of correctness (for a sequential computation) to
synthesize concurrency control that guarantees thread-safety.

{\bf Verifying Concurrent Programs}: 
Our proposed style of reasoning is closely related to the axiomatic
approach for proving concurrent programs of Owicki \&
Gries~\cite{owicki76verifying}.
While they focus on proving a concurrent program correct,
we focus on synthesizing concurrency control.
They observe that if two statements \textit{do not interfere},
the Hoare triple for their parallel composition can be obtained from the
sequential Hoare triples. Our approach identifies statements that \textit{may
interfere} and violate the sequential Hoare triples, and then
synthesizes concurrency control to ensure that sequential assertions
are preserved by parallel composition.
%

Prior work on verifying concurrent programs~\cite{OHearn:TCS07} has also shown that
attaching invariants to resources (such as locks and semaphores) can
enable modular reasoning about concurrent programs. Our paper turns
this around: we use sequential proofs (which are modular proofs, but
valid only for sequential executions) to identify critical invariants
and create locks corresponding to such invariants and augment the
program with concurrency control that enables us to lift the
sequential proof into a valid proof for the concurrent program.\\

\section{Limitations, Extensions, and Future Work}

In this paper, we have explored the idea that proofs of correctness
for sequential computations can yield concurrency control solutions
for use when the same computations are executed concurrently.
We have adopted simple solutions in a number of dimensions in order
to focus on this central idea.
A number of interesting ideas and problems appear worth pursuing
in this regard, as explained below.

\mypara{Procedures}
The simple programming language presented in Section~\ref{sec:problem} does not
include procedures.
The presence of procedure calls within the library gives rise to a different
set of challenges. Verification tools often compute procedure summaries
to derive the overall proof of correctness. Our
approach could use summaries as proxies for procedure calls and derives
concurrency control schemes where locks are acquired and released only in the
top-level procedures. A more aggressive approach could analyze the proofs
bottom up and infer nested concurrency control schemes where locks are acquired
and released in procedures that subsume the lifetimes of the corresponding
predicates. 

\mypara{Relaxed Memory Models}
The programming language semantics we use and our proofs assume
sequential consistency. We believe it should be possible to extend the
notion of logical interference to relaxed memory models. Under a
relaxed model, reads may return more values compared to sequential
consistent executions. Therefore, we may have to consider these
additional behaviors while determining if a statement can interfere
with (the proof of) a concurrent thread. We leave this extension for
future work.

\mypara{Optimistic Concurrency Control}
Optimistic concurrency control is an alternative to pessimistic concurrency
control (such as lock-based techniques).
While we present a lock-based pessimistic concurrency control mechanism,
it would be interesting to explore the possibility of optimistic concurrency
control mechanisms that exploit a similar weaker notion of interference. 

\mypara{Choosing Good Solutions}
This paper presents a space of valid locking solutions that guarantee
the desired properties.
Specifically, the locking solution generated is dependent on several
factors: the sequential
proof used, the basis used for the proof, the mapping from basis
predicates to locks, the linearization point used, etc. Given a metric
on solutions, generating a good solution according to the given metric
is a direction for future work. E.g., one possibility is to evaluate the
performance of candidate solutions (suggested by our framework) using
a suitable test suite to choose the best one.
Integrating the concurrency control synthesis approach with the proof
generation approach, as done by~\cite{VechevEtAl:POPL2010},
can also lead to better solutions, if the proofs themselves can be
refined or altered to make the concurrency control more efficient.

\mypara{Fine-Grained Locking}
Fine-grained locking refers to locking disciplines that use an
unbounded number of locks and associate each lock with a small number
of shared objects (typically one). Programs that use fine-grained
locking often scale better because of reduced contention for locks. In
its current form, the approach presented in this paper does not derive
fine-grained locking schemes. The locking schemes we synthesize
associate locks with predicates and the number of such predicates is
statically bounded. Generalizing our approach to infer fine-grained locking from
sequential proofs of correctness remains an open and challenging
problem. 

\mypara{Lightweight Specifications}
Our technique relies on user-provided specifications for the library.
Recently, there has been interest in lightweight annotations that capture
commonly used correctness conditions in concurrent programs (such as atomicity,
determinism, and linearizability).
As we discuss in Section~\ref{sec:intro}, we believe that there is potential for
profitably applying our technique starting with such lightweight specifications
(or even no specifications).

\mypara{Class invariants}
In our approach, a thread holds a lock on a predicate from the point
the predicate is established to the point after which the predicate is
no longer used. While this approach ensures correctness, it may often
be too pessimistic. For example, it is often the case that a library
is associated with class/object invariants that characterize the
\emph{stable} state of the library's objects. Procedures in the
library may temporarily break and then re-establish the invariants at
various points during their invocation. If class invariants are known,
it may be possible to derive more efficient concurrency control
mechanisms that release locks on the class invariants at points where
the invariants are established and re-acquire these locks when the
invariants are used. Such a scheme works only when all procedures
``co-operate'' and ensure that the locks associated with the
invariants are released only when the invariant is established.

\bibliographystyle{plain}
\bibliography{wypiwyg}

\end{document}